\documentclass[10pt,letterpaper]{article}
\usepackage[margin=1in]{geometry}
\usepackage{amsmath,amssymb,amsthm}
\usepackage{float}
\usepackage{graphicx}
\usepackage{hyperref}
\usepackage[utf8]{inputenc}
\usepackage{xcolor}
\usepackage{palatino}
\usepackage{caption}
\usepackage{subcaption}
\usepackage{booktabs}
\usepackage{enumitem}
\usepackage{comment}
\usepackage{lipsum}
\usepackage{mathtools}
\usepackage{cuted}

\newcommand{\rl}[1]{\noindent{ \textcolor{orange}{\footnotesize[{\bf Ling}: #1]}}}
\newcommand{\dg}[1]{{\color{blue}{#1}}}

\newtheorem{theorem}{Theorem}
\newtheorem{definition}{Definition}
\newtheorem{lemma}[definition]{Lemma}
\newtheorem{corollary}[definition]{Corollary}
\newtheorem{condition}[definition]{Condition}
% \newtheorem{definition}[theorem]{Definition}
% \newtheorem{lemma}[theorem]{Lemma}
% \newtheorem{corollary}[theorem]{Corollary}
% \newtheorem{condition}[theorem]{Condition}

% changes
%\newcommand{\cay}{k}%{\color{blue}k}}
% 1. k in words->{\cay}
\newcommand{\cappa}{k} %{{\color{red}\kappa}}
% 2. 2k+2 -> 2{\cappa}
% 3. 2k+1 -> {\color{red}2\cappa-1}
% 4. (k+1) -> {\cappa}
% 5. k+1 -> {\cappa}
% 5. k+1 -> {\cappa}
% 6. k+\frac12 -> {\color{red}{\cappa-\frac12}}
% \newcommand{\capaone}{{\color{red}\kappa-1}}
% 7. k -> {\capaone}

%\acmConference[AFT '22]{ACM AFT}{2022}{Cambridge, MA} \acmBooktitle{submitted to the 2022 ACM AFT}

\begin{document}
\title{Bitcoin's Latency--Security Analysis Made Simple}
\author{%Anonymous}
Dongning Guo \\ Northwestern University \\ dguo@northwestern.edu  \and Ling Ren \\ University of Illinois at Urbana-Champaign \\ renling@illinois.edu }
\date{}

\maketitle

\begin{abstract}
Simple closed-form upper and lower bounds are developed for the security of the Nakamoto consensus as a function of the confirmation depth, the honest and adversarial block mining rates, and an upper bound on the block propagation delay.
The bounds are exponential in the confirmation depth and apply regardless of the adversary's attack strategy.
The gap between the upper and lower bounds is small for Bitcoin's parameters.
For example, assuming an average block interval of ten minutes, a network delay bound of ten seconds, and 10\% adversarial mining power, the widely used 6-block confirmation rule yields a safety violation between 0.11\% and 0.35\% probability.  
%If the adversarial mining power increases to 25\%, one needs a 20-block rule to achieve similar bounds
\end{abstract}

\sloppy
\pagestyle{plain}

\newcommand{\f}[1]{f_{#1}}
\newcommand{\bl}[1]{b_{#1}}
\newcommand{\F}[2]{F_{#1,#2}}
\newcommand{\Y}[2]{Y_{#1,#2}}
\newcommand{\Z}[2]{Y_{#1,#2}}
\renewcommand{\H}[2]{H_{#1,#2}}
\newcommand{\A}[2]{A_{#1,#2}}
\newcommand{\D}[2]{D_{#1,#2}}
\newcommand{\Exp}{\mathsf E}
\newcommand{\expect}[1]{{\Exp}\left\{#1\right\}}

\section{Introduction}

The famed Bitcoin white paper presented a novel Byzantine fault tolerant consensus algorithm that is now known as the Nakamoto consensus~\cite{nakamoto2008bitcoin}.
A notable property of the Nakamoto consensus is that the deeper a transaction is in a longest blockchain,
%a live ledger \rl{unclear what live means} \dg{[I mean a ledger keeps growing; really a credible or longest chain, but the context for such language is not in place yet.  Feel free to improve.]},
the safer it is to commit the transaction.  
In fact, 
%assuming the fraction of honest mining power is fixed, % \rl{removed ``fixed''}
the probability that a transaction's safety is violated decreases essentially exponentially with the number of ``confirmations''.
A latency-security analysis of the Nakamoto consensus aims at deriving upper bounds on the safety violation probability of a given confirmation rule, under all possible attacks allowed in a formally described model. 

While the Nakamoto consensus protocol is simple and elegant, rigorously analyzing its latency and security turns out to be quite challenging.
The Bitcoin white paper did not provide a formal model or rigorous analysis;
instead, it only considered one specific attack. % strategy.
Garay et al.~\cite{garay2015bitcoin} provided the first latency-security analysis for the Nakamoto consensus against all possible attacks. Follow-up works extended their analysis from a lockstep synchrony model to the more realistic non-lockstep synchrony model~\cite{pass2017analysis,pass2017rethinking,kiffer2018better,ren2019analysis}. 
The fundamental fault tolerance limit of the Nakamoto consensus has also been obtained~\cite{dembo2020everything,gavzi2020tight}.

%(though often under other misnomers). \rl{should we tone this down?} \dg{[Fine to not mention misnomers.]}

Most existing latency-security analysis of the Nakamoto consensus focused on asymptotic results. 
This means they showed that transactions in the Nakamoto consensus \emph{eventually} become permanent (also called committed, decided, finalized, or immutable in the literature), but do not provide concrete results on when that happens.  
Several recent works give concrete analysis~\cite{li2020liveness, li2021close, gavzi2021practical} 
but their methods are quite complex. 
%and their results are not in a closed form.
In particular, \cite{li2020liveness,li2021close} resort to a stochastic analysis of races between renewal processes and focus on confirmation by time rather than the more practically relevant confirmation by depth.
% and the results are expressed with infinite series;
Reference~\cite{gavzi2021practical} involves an complicated analysis of forkable strings and an iterative dynamic programming algorithm to numerically evaluate a Markov chain, and the results are not in a closed form. 
% li et al has closed form results.

In this paper, we develop two sets of simple upper and lower bounds on the security of the Nakamoto consensus as a function of the confirmation depth, the mining rates, and a block propagation delay upper bound.  
One set of upper and lower bounds are essentially exponential functions in the confirmation depth.  A second set of closed-form bounds are numerically closer and still easy to compute using  finite sums of geometric and binomial distribution functions.
% The gap between our result and existing lower bound 

The gap between the upper and lower bounds is small for the Bitcoin parameters.
For example, assuming on average one block is mined every ten minutes, the delay bound is ten seconds, and 10\% of the mining power is adversarial, the probability of safety violation of the 6-block rule of thumb is bounded between 0.11\% and 0.35\%.  If the adversarial mining power increases to 25\%, one needs a 20-block rule to achieve similar bounds.
For the Bitcoin parameters and relatively small confirmation depths, our new upper bound is tighter than the best existing result in~\cite{gavzi2021practical}. 

In this paper, we introduce a new reduction technique to analyze Nakamoto consensus protocols.  We describe a ``rigged'' model in which some selected blocks mined by honest nodes are converted into adversarial blocks.  Such conversion can only make the adversary more powerful.  We judiciously select those blocks for conversion such that the sequence of honest/adversarial attributes of all blocks remains memoryless in the rigged model.
We %then 
show that the well-understood private-mining attack is a {\em best} attack in the rigged model.  Finally, we obtain simple upper and lower bounds
on the safety violation probability for the
original model by analyzing the success probability of the private-mining attack in the rigged model.  

%\rl{potentially say more about reduction in conference submission}
%\dg{ONLY WE HAVE CLOSED FORM FORMULAS.  WE SHOULD TRY TO FIND THE EXACT EXPONENT AND GIVE A $c_1 \exp[-c_2 {\cappa-1}]$ EXPRESSION FOR BOTH UPPER AND LOWER BOUNDS FOR LARGE {\cappa-1}.  ALSO FOR FIXED {\cappa-1} EXPANSION ON $\rho-1/2$.} \rl{That would be nice. But I don't see how at the moment, other than e raised to log of our current result}

The remainder of the paper is organized as follows.  In Section~\ref{sec:model:formal}, we present the canonical
model for the Nakamoto consensus.  The main theorems are given in Section~\ref{s:main}.  In Section~\ref{s:optimal}, we 
show that the well-known private-mining attack is optimal under a special condition.  In Section~\ref{s:probabilistic}, we construct a rigged model which on one hand makes the adversary more powerful, and on the other hand makes the private-mining attack optimal. Two sets of upper and lower bounds on the probability of safety violation are then obtained in Sections~\ref{s:probabilistic} and~\ref{s:asymptotics}. 
% \rl{reads better without the word respectively} dg=OK %\dg{respectively}.
Numerical results are given in Section~\ref{s:numerical}.
% Section~\ref{s:conclusion} concludes the paper.

%\input{prelim}
\begin{comment}
\section{Background}

\subsection{Review of the Nakamoto Consensus} % Protocol}
\label{sec:model:informal}

A practical consensus protocol provides the interface of \emph{state machine replication}. 
A state machine replication (SMR) protocol orders client requests into a growing sequence (called a log or a ledger) and guarantees safety and liveness as follows:
\begin{itemize}[topsep=4pt,itemsep=0pt]
\item[--] \textbf{Safety}. Honest nodes do not commit different requests (transactions) at the same ledger position.
\item[--] \textbf{Liveness} Every client request is eventually committed.
\end{itemize}
With these two properties, a SMR protocol provides the abstraction of a single non-faulty server to external clients.
\end{comment}

\section{The Canonical Model}
\label{sec:model:formal}

We assume the readers are familiar with how the Nakamoto consensus protocol works.
We briefly describe the protocol below only to introduce notation.  %DG I CHEckED M-W.com.  NOT SO SURE BUT I THink AS A SYSTEM IT SHOULD BE SINGULAR? notations. 
The protocol builds a growing sequence of transaction-carrying blocks where every block is chained to its predecessor block by a solution to a computational puzzle (i.e., a proof of work).  % HYPHEN ONLY AS ADJECTIVE?
At any point in time, every honest node attempts to ``mine'' a new block that extends \dg{a} %the
longest chain of blocks (blockchain) to its knowledge; once a new longest blockchain is mined or received, an honest node sends it to other nodes through a gossip network.
We will use chains and blockchains interchangeably in this paper.
A node commits a block when 
at least $\cappa-1$ blocks are mined on top of it 
as part of a longest blockchain known to that node, where $\cappa$ is a natural number
chosen by the node.
We call this the {\em $\cappa$-confirmation commit rule}.

We now define the blockchain data structure.
Let us number all the blocks in the time order they are mined.
The $j$-th block in this numbering is called block $j$ for short.
We denote a blockchain using a sequence of block numbers, e.g., in blockchain $(b_0, b_1, b_2, \ldots, b_m)$, the $i$-th block in the chain is block $b_i$.
A blockchain always starts with the
Genesis block $b_0=0$ that is known to all nodes by time 0, when the protocol starts. 
Each subsequent block $b_i$ must contain a proof of work
that binds it to the predecessor block $b_{i-1}$.
The last block, $b_m$ in the above example, uniquely identifies the entire blockchain, so we will also refer to the above blockchain as blockchain $b_m$ or chain $b_m$.
The position of a block in the blockchain is called its \emph{height}.
We will use $h_b$ to denote the height of block $b$.
In the above example, block $b_i$ is on height $i$, i.e., $h_{b_i} = i$ (the Genesis block is on height 0).
% \dg{Evidently, any pair of chains share the genesis block and possibly also additional blocks up to a certain height, after which they do not share any blocks.}

% \paragraph{Continuous time with bounded delay.}
We adopt a natural continuous-time model and model proof-of-work mining as a homogeneous %(memory-less) 
Poisson point process.
% DG: Poisson process is memoryless by definition.
Let $\lambda$ be the total mining rate of the network (honest and adversarial nodes combined).
Let $\rho \in (\frac12, 1]$ be the fraction of honest mining rate and $1-\rho$ be the fraction of adversarial mining rate.
Note that the model allows some of the nodes to have zero mining rate, hence capturing light nodes.
In the canonical model, a block is said to be honest (resp.\ adversarial) if it is mined by an honest (resp.\ adversarial) node.
One can think of honest block arrivals and adversarial blocks as two independent Poisson processes with rates $\rho\lambda$ and $(1-\rho)\lambda$, respectively.
Due to the Poisson merging and splitting properties, it is equivalent to think of all blocks arrivals as a single Poisson process with rate $\lambda$ where each block is honest with probability $\rho$ and adversarial with probability $1-\rho$.

We use $t_j$ to denote the time block $j$ is mined.
Evidently, $0=t_0\le t_1\le t_2\le \dots$.
If block $j$ is mined by an honest node, it must extend a longest blockchain to that node's knowledge immediately before $t_j$.
Ties are broken arbitrarily or by the adversary at will. 
The honest node will also immediately publish the newly mined block $j$ through a gossip network.
% Let $\Delta$ be the network delay upper bound. 
% Then,
% all other nodes in the network will receive the block by time $t_j+\Delta$.
% Note that we abstract away the gossip network and assume a direct communication channel between every pair of nodes.

Without loss of generality,
we assume a single adversary controls all adversarial mining power.
If the adversary mines block $j$, the block may extend any blockchain mined by time $t_j$ and may be presented to each individual honest node 
at any time from $t_j$ onward at will. 
%it may be presented to individual honest nodes at any time after $t_j$.
%
The adversary in practice cannot predict the arrival times of future blocks but our results hold even against an omniscient adversary
that sees the arrival times of all future blocks. 

We assume the standard (non-lockstep) synchrony model. 
We abstract away the topology and operations of the gossip network by assuming a universal block propagation delay upper bound, denoted as $\Delta\ge0$.
Applying it to the Nakamoto consensus, if any honest node mines or receives a new longest blockchain $b$ at time $t$, then all nodes receive blockchain $b$ by time $t+\Delta$.

A block may contain an arbitrary number of transactions. 
A transaction may appear in multiple chains but
it can appear at most once in any given chain.
The adversary's goal is to attack the safety of a target transaction~\cite{nakamoto2008bitcoin}, as defined below. 

\begin{definition}[Safety violation of a target transaction]
A target transaction's safety is violated if a block containing the target transaction is committed and a different
% distinct % DG - distinct is not the right word?
block (which may or may not contain the target transaction) is also committed on the same height.
\end{definition}
%\rl{May be odd to someone not familiar with consensus. It is still a violation even if tx is included in a later ledger position.}

In this paper, we assume the target transaction appears in every node's view at time $\tau$. We also assume all honest nodes adopt the $\cappa$-confirmation commit rule with the same $\cappa$.
% \dg{[Can the adversary see the transaction before $\tau$?  Yes in general, right?  If so does this create any difficulty?]} \rl{it creates difficulty, cannot handle right now}
Our goal is to obtain tight bounds on the probability that the adversary violates the safety of the target transaction.
%as a function of the confirmation depth $\cappa$, the delay bound $\Delta$, the total mining rate $\lambda$, and the fraction of honest mining power~$\rho$.

\begin{comment}
\dg{
% DG: TOO IMPORTANT A NOTION TO LEAVE UNDEFINED.
It is useful to clarify the notion of an ``attack''.
Since the honest and adversarial mining processes are independent and memoryless, an adversary has no foresight to the arrival times of future blocks.
Hence an adversary's attack in practice is non-anticipating in that it is adapted to the filtration of the mining processes.
The attack controls at any point in time how to distribute the adversarial mining power to blocks mined thus far and whether to add in-flight honest block(s) to honest views (so as to manipulate honest nodes).
% The goal of this paper is to provide simple, rigorous bounds on the probability of safety violation for given parameters ($\lambda, \rho, \Delta, {\cappa-1}$).
To develop upper bounds, however, we shall give the adversary additional privileges so that it can mount strictly more powerful attacks.  More on this later.
% shall make the adversary omniscient in that its attack is a function of the mining processes over the entire time axes (to be precise, it is a function of the outcome of the entire probability space).
}
\end{comment}

Table \ref{t:notation} illustrates frequently used notations in this paper. % DG notations IN PLURAL HERE?

\begin{table}[tb]
\begin{center}
\begin{tabular}{ |c|l| } 
\hline
$\cappa$ & confirmation depth \\
$\lambda$ & total mining rate \\ 
$\rho$ & fraction of honest mining rate \\
$\Delta$ & block propagation delay upper bound \\
$h_b$ & height of block $b$ \\
$t_b$ & mining time of block $b$ \\
$p=\rho e^{-\lambda\Delta}$ & fraction of honest blocks \\ & \qquad in the rigged model \\
%$d(p\|q)$ & relative entropy \\
\hline
\end{tabular}
\end{center}
\smallskip
\caption{Some frequently used notations.}
\vspace{-8pt}
\label{t:notation}
\end{table}

\section{Main Results}
\label{s:main}

\begin{theorem} \label{th:exponential}
    Given the confirmation depth $\cappa$ as a natural number, the delay bound $\Delta \ge0$, the total mining rate $\lambda>0$, and the fraction of honest mining power~$\rho\in(0,1]$, as long as
    \begin{align} \label{eq:tolerance}
        p = \rho e^{-\lambda \Delta} > \frac12,
    \end{align}
    a target transaction's safety can be violated with probability no greater than
    \begin{align} \label{eq:aupper}
        \left( 2 + 2 \sqrt{\frac{p}{1-p}} \right) (4p(1-p))^{{\cappa}}
        % \left( \frac1p + \frac1{\sqrt{p(1-p)}} \right) (4p(1-p))^{{\cappa}} .
        % \left( \frac1p + \frac1{\sqrt{pq}} \right) (4pq)^{{\cappa}}
        % \\ 4(q+\sqrt{pq}) (4pq)^{\cappa-1}
        % (4pq)^{\cappa-1} 4(\sqrt{pq}+q)\\
        % 4\left(1-p+\sqrt{p-p^2}+1-p\right) (4p(1-p))^{\cappa-1}
        % \left( 4 \rho e^{-\lambda\Delta} (1-\rho e^{-\lambda\Delta}) \right)^{\cappa-1} 
        % \exp\left[ - ({2\cappa-1}) 
        % d\left( \frac12 \,\Big\|\, \rho e^{-\lambda\Delta} \right) \right] 
    \end{align}
    % where $p=\rho e^{-\lambda \Delta}$,
    regardless of the adversary's attack strategy.

    Conversely, there exists an attack that violates the target transaction's safety with probability at least
    \begin{align} \label{eq:alower}
        \frac1{\sqrt{{\cappa}}} 
        \left( 4 \rho (1-\rho) \right)^{{\cappa}} .
        % \\ \frac{1-\rho}{1-2\rho} \frac1{\sqrt{\pi {\cappa-1}}} \left( 4 \rho (1-\rho) \right)^{{\cappa}}
        % \exp\left[ -2{\cappa} d\left( \frac12 \,\Big\|\, \rho \right) \right].
    \end{align}
    % $\sigma=1-\rho$ and $p=1-q=\rho e^{-\lambda\Delta}$.
    % as long as SOME CONDITIONS ON THE PARAMETERS ARE SATISFIED - ACTUALLY, IT APPEARS NO CONDITIONS ARE NEEDED.
\end{theorem}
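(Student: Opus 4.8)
The plan is to prove the two directions separately: the upper bound by reducing to an analyzable \emph{rigged} model in which the adversary is strictly stronger, and the lower bound by directly exhibiting and analyzing the private-mining attack in the original model. For the upper bound, I would first construct the rigged model promised in the introduction. The idea is to let an honest block remain honest only when it is a \emph{loner} --- no other block is mined in the $\Delta$-window preceding it --- so that it uniquely extends the longest chain and is delivered to every node before the next block appears; every other honest block is handed to the adversary. Since blocks arrive as a rate-$\lambda$ Poisson process, the probability that a given honest block is a loner is $e^{-\lambda\Delta}$, so in the rigged model each block is honest with probability $p=\rho e^{-\lambda\Delta}$ and adversarial with probability $1-p$, independently across blocks. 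I would then argue that converting honest blocks into adversarial blocks only enlarges the set of admissible attacks, so the safety-violation probability in the rigged model dominates that in the original model; by the optimality of private mining established in Sections~\ref{s:optimal}--\ref{s:probabilistic}, it suffices to bound the success probability of private mining against an i.i.d.\ honest/adversarial sequence with honest probability $p>\tfrac12$.

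The heart of the upper bound is then a random-walk estimate. Writing the rigged sequence as steps $+1$ (honest, probability $p$) and $-1$ (adversarial, probability $1-p$) starting from the fork preceding the target block, the honest chain height and the adversary's private chain height are the running counts of the two block types, and safety is violated precisely when the private chain catches up to a depth-$\cappa$ honest chain. I would express this as a first-passage event and bound its probability by summing the reflected-path (central-binomial) contributions over the possible values of the honest lead at the confirmation time. Each contribution carries the factor $(4p(1-p))^{\cappa}$, and the sum over leads forms a geometric-type series whose closed form produces the prefactor $2+2\sqrt{p/(1-p)}$, yielding~\eqref{eq:aupper}. The slower-decaying rate $(4p(1-p))^{\cappa}$, rather than the naive gambler's-ruin rate $((1-p)/p)^{\cappa}$, reflects that the attack may exploit many catch-up opportunities together with a randomly timed confirmation.

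For the converse I would exhibit the private-mining attack in the original model and simply discard the delay (analyzing the $\Delta=0$ case), which only underestimates the adversary and hence gives a valid lower bound in terms of $\rho$ rather than $p$. A sufficient event for success is a \emph{balanced race}: the adversary privately builds $\cappa$ blocks on the fork no later than the honest nodes build their $\cappa$ confirming blocks, after which the omniscient, tie-breaking adversary can have a conflicting block committed at the same height. Lower-bounding this balanced-race probability by a central-binomial sum dominated by $\binom{2\cappa}{\cappa}\rho^{\cappa}(1-\rho)^{\cappa}$ and applying a standard estimate such as $\binom{2\cappa}{\cappa}\ge 4^{\cappa}/(2\sqrt{\cappa})$ then gives~\eqref{eq:alower}.

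I expect the main obstacle to be the rigged-model construction itself. One must choose which honest blocks to sacrifice so that \emph{both} (i) the strengthened adversary can simulate every honest behavior of the original model, and (ii) the resulting honest/adversarial attribute sequence is \emph{exactly} i.i.d.\ Bernoulli$(p)$. These are delicate because the loner condition couples each block to its $\Delta$-neighborhood, so proving memorylessness requires showing that the loner indicators can be read off the Poisson process without introducing dependence, while the domination argument requires a careful coupling between executions of the two models. By comparison, the random-walk summation for the upper bound and the balanced-race estimate for the converse are routine once these reductions are in place.
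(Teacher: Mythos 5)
Your overall architecture coincides with the paper's: the ``loner'' conversion is exactly the paper's lagger/tailgater construction (a block is a lagger iff its inter-arrival gap exceeds $\Delta$, and since these gaps are i.i.d.\ exponential the independence you worry about is immediate, not delicate); the domination argument and the appeal to the optimality of private mining under Condition~\ref{honest_condition} are the same; and your lower bound --- run private mining at $\Delta=0$, reduce to a balanced race, and bound $\binom{2\cappa}{\cappa}\rho^{\cappa}(1-\rho)^{\cappa}$ from below --- is the paper's argument almost verbatim (the paper writes the sufficient event as $L'+B'\ge\cappa$, drops to $\Pr(B_0\ge\cappa)$ for a binomial $B_0$ with parameters $(2\cappa,1-\rho)$, and invokes a Stirling-type tail lemma; note that keeping only the single central term with $\binom{2\cappa}{\cappa}\ge 4^{\cappa}/(2\sqrt{\cappa})$ gives $\tfrac{1}{2\sqrt{\cappa}}(4\rho(1-\rho))^{\cappa}$, a factor of $2$ short of \eqref{eq:alower}, so you do need the full tail estimate rather than one term).

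The genuine gap is in the final quantitative step of the upper bound. The paper does not sum reflected paths; it isolates three quantities --- the pre-mining lead $L$ at time $\tau$ (stationary geometric), the number $B$ of adversarial blocks among the first $2\cappa-L$ blocks after $\tau$ (conditionally binomial), and the maximum future reach $M$ of the $\pm1$ random walk (geometric) --- proves that safety violation implies $2L+2B+M\ge 2\cappa-1$, and then applies a Chernoff bound with the three MGFs, optimizing at $e^{\nu}=p/q$ so that $p+qe^{\nu}=2p$ gives the rate $(4pq)^{\cappa}$ and the evaluated MGF ratios give precisely the prefactor $2+2\sqrt{p/q}$. Your sketch never identifies this decomposition (in particular the roles of the stationary lead $L$ and of the post-confirmation catch-up $M$ are only alluded to), and the assertion that ``each contribution carries the factor $(4p(1-p))^{\cappa}$'' and that ``the sum over leads forms a geometric-type series whose closed form produces the prefactor $2+2\sqrt{p/(1-p)}$'' is stated, not derived. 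A term-by-term summation of $\Pr(L=l)\Pr(B=j\mid L=l)\Pr(M\ge 2\cappa-1-2l-2j)$ can plausibly be bounded by $C\,(4pq)^{\cappa}$ for some constant $C$, but nothing in your argument shows the constant comes out to $2+2\sqrt{p/(1-p)}$; that specific prefactor is an artifact of the MGF computation. Also, ``safety is violated precisely when the private chain catches up'' overstates matters: the containment $F\subseteq\{2L+2B+M\ge 2\cappa-1\}$ is only a necessary condition (which suffices for the upper bound), and the $-1$ correction for the possibly non-public highest honest block must be tracked to get the stated constant. To complete your proof you would need to either carry out the Chernoff computation as in \eqref{eq:FLBM}--\eqref{eq:pqnu} or exhibit an explicit summation whose closed form matches \eqref{eq:aupper}.
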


Interestingly, both the upper bound~\eqref{eq:aupper} and the lower bound~\eqref{eq:alower} can be expressed in the exponential form using a simple equality:
\begin{align}
    (4p(1-p))^{{\cappa}}
    =
    e^{-2 {\cappa} d\left( \frac12 \,\|\, p \right)}
\end{align}
where 
$    d(p \,\|\, q)
    =
    p \log \frac{p}q + (1-p) \log\frac{1-p}{1-q}
$
denotes the relative entropy between the Bernoulli($p$) and the Bernoulli($q$) distributions.  
As we shall see, the safety violation event is tied in some ways to the event that $\cappa$ or more out of $2\cappa$ independent Bernoulli trials are successful. For large $\cappa$, the probability of this event decays exponentially in $\cappa$, where the exponent takes the form of a relative entropy,
% the count of ones in a Bernoulli process exceeds $k$ in the first $2{\cappa}$ trials.
%Intuitively, this race \rl{I wouldn't call it a race. Is there a better way to describe this?} between the number of ones and one half of the total number of trials leads to the exponents $d(\frac12 \,\|\, p)$ and $d(\frac12 \,\|\, \rho)$,
as is expected from the perspective of large deviations.  We emphasize that the bounds in Theorem~\ref{th:exponential} apply to all adversarial mining strategies and all choices of the confirmation depth.
%, including $k=0$.

%Evidently, the upper bound~\eqref{eq:aupper} is exactly exponential in the confirmation depth $\cappa$, whereas the lower bound~\eqref{eq:alower} is essentially exponential for all but very small values of $\cappa$.

We also provide another pair of upper and lower security bounds that are tighter than the bounds in Theorem~\ref{th:exponential}. The second set of bounds take somewhat more complicated form as finite series sums, but are still very easy to evaluate using the probability mass function (pmf) and cumulative distribution function (cdf) of the binomial and geometric distributions. Specifically, we use the following variant of the geometric distribution with parameter $p=1-q>\frac12$. For $i=1,2,\dots$,
its pmf is expressed as
\newcommand{\qop}{\frac{q}{p}}
\newcommand{\boa}{\frac{\beta}{\alpha}}
\begin{align}    \label{eq:P1}
    P_1(i;p) = \left( \qop \right)^{i-1} \left( 1 - \qop \right) ,
    % P_1(i;p) = \left( \qop \right)^i \left( 1 - \qop \right) ,
\end{align}
and its complementary cdf is expressed as
\begin{align} \label{eq:F1}
    \overline{F}_1(i;p) = \left(\qop\right)^{i} .
    % \overline{F}_1(i;p) = \left(\qop\right)^{i+1} .
\end{align}
We also denote the pmf of the binomial distribution with parameters $(n,q)$ as
\begin{align} \label{eq:P2}
    P_2(j; n, q) = \binom{n}{j} q^j (1-q)^{n-j}
    % P_2(j; n, q) = {n \choose j} q^j (1-q)^{n-j}
\end{align}
and the corresponding complementary cdf as
\begin{align} \label{eq:F2}
    \overline{F}_2(j; n,q) = \sum_{l=j+1}^n P_2(l;n, q) .
\end{align}
%respectively.

\begin{theorem} \label{th:bounds}
    Given the confirmation depth $\cappa$ as a natural number, the delay bound $\Delta\ge0$, the total mining rate $\lambda>0$, and the fraction of honest mining power~$\rho\in(0,1]$,
    as long as $p=\rho e^{-\lambda \Delta}>\frac12$,
    a target transaction's safety can be violated with probability no greater than
    \begin{align}
    \begin{split}
    \overline{F}_1(\cappa;p) + \sum_{i=1}^{\cappa} P_1(i;p) \cdot
    \bigg(\overline{F}_2(\cappa-i; 2{\cappa}+1-i,1-p) % \\ &\quad 
    + \sum_{j=0}^{\cappa-i} P_2(j; 2{\cappa}+1-i,1-p) \cdot  \overline{F}_1(2\cappa+1-2i-2j;p) \bigg)
    \end{split}
    % \\    \begin{split}
    % &\overline{F}_1(k;p) + \sum_{i=0}^{k} P_1(i;p) \cdot
    % \bigg(\overline{F}_2(k-i; 2k+2-i,1-p) % \\ &\quad 
    % + \sum_{j=0}^{k-i} P_2(j; 2k+2-i,1-p) \cdot  \overline{F}_1(2k-2i-2j;p) \bigg)
    % \end{split}
    % \\ \begin{split}
    % &\overline{F}_1({\cappa-1};p) + \sum_{i=1}^{\cappa} P_1(i-1;p) \cdot
    % \bigg(\overline{F}_2(\cappa-i; 2{\cappa}-i+1,1-p) % \\ &\quad 
    % + \sum_{j=0}^{\cappa-i} P_2(j; 2{\cappa}-i+1,1-p) \cdot  \overline{F}_1(2\cappa-2i-2j;p) \bigg)
    % \end{split}
    % \\ \begin{split}
    % &\overline{F}_1({\cappa-1};p) + \sum_{i=1}^{\cappa} P_1(i-1;p) \cdot
    % \bigg(\overline{F}_2({\cappa-1}-i+1; 2{\cappa}-i+1,1-p) % \\ &\quad 
    % + \sum_{j=0}^{\cappa-i} P_2(j; 2{\cappa}-i+1,1-p) \cdot  \overline{F}_1({2\cappa-2}-2i-2j+2;p) \bigg)
    % \end{split}
    % \\ \begin{split}
    % &\overline{F}_1({\cappa-1};p) + \sum_{i=0}^{{\cappa-1}} P_1(i;p) \cdot
    % \bigg(\overline{F}_2({\cappa-1}-i; 2{\cappa}-i,1-p) % \\ &\quad 
    % + \sum_{j=0}^{{\cappa-1}-i} P_2(j; 2{\cappa}-i,1-p) \cdot  \overline{F}_1({2\cappa-2}-2i-2j;p) \bigg)
    % \end{split}
    \label{eq:upperbound}
    \end{align}
    % where $p=\rho e^{-\lambda \Delta}$,
    regardless of the adversary's attack strategy.
    
    Conversely, there exists an attack that violates the target transaction's safety with probability at least
    \begin{align}
    \begin{split}
    \overline{F}_1(\cappa;\rho) + \sum_{i=1}^{\cappa} P_1(i;\rho) \cdot
    \bigg(\overline{F}_2(\cappa-i; 2{\cappa}+1-i,1-\rho) % \\ &\quad 
    + \sum_{j=0}^{\cappa-i} P_2(j; 2{\cappa}+1-i,1-\rho) \cdot  \overline{F}_1(2\cappa+2-2i-2j;\rho) \bigg) .
    \end{split}
    % \\  \begin{split}
    % &\overline{F}_1({\cappa-1};\rho) + \sum_{i=0}^{{\cappa-1}} P_1(i;\rho)
    % \cdot \bigg( \overline{F}_2({\cappa-1}-i; 2{\cappa}-i, 1-\rho) %\\& \quad
    % + \sum_{j=0}^{{\cappa-1}-i} P_2(j; 2{\cappa}-i, 1-\rho) \cdot  \overline{F}_1({2\cappa-1}-2i-2j;\rho) \bigg) . 
    % \end{split}
    % \\  \begin{split}
    % &\overline{F}_1({\cappa-1};\rho) + \sum_{i=0}^{{\cappa-1}} P_1(i;\rho)
    % \cdot \bigg( \overline{F}_2({\cappa-1}-i; 2{\cappa}-i, 1-\rho) %\\& \quad
    % + \sum_{j=0}^{{\cappa-1}-i} P_2(j; 2{\cappa}-i, 1-\rho) \cdot  \overline{F}_1({2\cappa-1}-2i-2j;\rho) \bigg) . 
    % \end{split}
    % \\        \begin{split}
    % &\overline{F}_1(k;\rho) + \sum_{i=0}^{k} P_1(i;\rho)
    % \cdot \bigg( \overline{F}_2(k-i; 2k+2-i, 1-\rho) %\\& \quad
    % + \sum_{j=0}^{k-i} P_2(j; 2k+2-i, 1-\rho) \cdot  \overline{F}_1(2k-2i-2j+1;\rho) \bigg) . 
    % \end{split}
    \label{eq:lowerbound}
    \end{align}
\end{theorem}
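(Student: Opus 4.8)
The plan is to prove both bounds by reducing the safety-violation probability to the success probability of the private-mining attack against a memoryless block sequence, and then to evaluate that probability exactly as a gambler's-ruin race. For the upper bound I would invoke the reduction technique outlined in the introduction and to be developed in Sections~\ref{s:optimal} and~\ref{s:probabilistic}: in the rigged model each block is independently honest with probability $p=\rho e^{-\lambda\Delta}$ and adversarial with probability $q=1-p$, converting selected honest blocks into adversarial ones only helps the adversary, and the private-mining attack is optimal there. Hence the safety-violation probability in the original model is at most the success probability of private mining against an i.i.d.\ Bernoulli($p$) block sequence, and it suffices to compute the latter. For the lower bound I would instead exhibit the private-mining attack directly in the original model, where each block is honest with probability $\rho$, and lower-bound its success probability; the slightly larger exponent $2\cappa+2-2i-2j$ appearing in~\eqref{eq:lowerbound} (versus $2\cappa+1-2i-2j$ in~\eqref{eq:upperbound}) reflects a one-block-conservative accounting of the race boundary that keeps the bound valid for a concrete, executable attack.

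The core computation is to express the private-mining success probability against a Bernoulli($p$) sequence in closed form. Because every honest block in the rigged model advances the public longest chain by exactly one, I would model the race after the fork as a walk that steps up on each honest block and is controlled by the adversary on each adversarial block. A safety violation requires two different blocks committed at the target height, so the honest chain must reach depth $\cappa$ (committing the target block) while the adversary's private chain must also reach depth $\cappa$ and be at least as long as the public chain at some later point (committing the competing block, using adversarial tie-breaking). Since $p>\tfrac12$ the walk drifts toward the honest side, so the adversary's deficit is a negatively drifting walk whose probability of ever returning to zero from depth $m$ is exactly $(q/p)^{m}=\overline{F}_1(m;p)$, the standard biased gambler's-ruin probability.

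I would then evaluate the success probability by a three-phase, first-passage decomposition that produces the three factors in~\eqref{eq:upperbound}. Conditioning on the adversary's maximum lead during the initial phase yields the geometric weight $P_1(i;p)=(q/p)^{i-1}(1-q/p)$; the composition of the window of $2\cappa+1-i$ blocks needed for the honest chain to reach commit depth yields the binomial weight $P_2(j;2\cappa+1-i,q)$, where $j$ counts the adversarial blocks in that window. If the adversary already secures more than $\cappa-i$ of them it wins outright, contributing $\overline{F}_2(\cappa-i;2\cappa+1-i,q)$; otherwise it leaves the window behind by $(2\cappa+1-i-j)-j-i=2\cappa+1-2i-2j$ and must win the final catch-up, contributing $\overline{F}_1(2\cappa+1-2i-2j;p)$, while the zero-lead boundary case contributes the standalone $\overline{F}_1(\cappa;p)$. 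Summing over $i$ and $j$, and using the strong Markov property so that the final catch-up factor separates out by memorylessness of the rigged sequence, assembles~\eqref{eq:upperbound}; the identical bookkeeping with $\rho$ in place of $p$ and the one conceded block gives~\eqref{eq:lowerbound}.

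The main obstacle I anticipate is the exact combinatorial bookkeeping rather than any single hard inequality: pinning down the window length $2\cappa+1-i$ and the residual deficit $2\cappa+1-2i-2j$ so that the three phases partition the safety-violation event without double-counting, and verifying that the intermediate walk stays on the correct side of zero so the first-passage decomposition is legitimate. A secondary subtlety concerns the lower bound, where I must check that the described private-mining schedule is actually realizable by an adversary operating under the $\Delta$-delay non-lockstep model, and that treating every honest block as advancing the chain (ignoring the honest blocks lost to within-$\Delta$ collisions, which only help the adversary) together with the one-block boundary concession is the correct conservative accounting that justifies the shifted exponent.
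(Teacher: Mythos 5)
Your proposal follows essentially the same route as the paper: reduce to a rigged model with i.i.d.\ Bernoulli($p$) block attributes in which the private-mining attack is provably optimal, decompose its success into the pre-mining lead $L$ (geometric), the adversarial count $B$ among the next $2\cappa-L$ blocks (binomial), and the maximum reach $M$ of the subsequent biased walk (geometric), bounding the violation event by $2L+2B+M\ge 2\cappa-1$ for the upper bound and computing the exact event $2L'+2B'+M'\ge 2\cappa$ under zero delay for the lower bound. Two descriptive slips that carrying out the computation would correct: the geometric weight $P_1(i;p)$ is the stationary distribution of the lead \emph{at} the time the transaction appears (a reflected birth--death process), not the maximum lead over the initial phase, and the standalone term $\overline{F}_1(\cappa;p)$ is $\Pr(L\ge\cappa)$ --- the pre-mining lead alone already sufficing --- rather than a ``zero-lead boundary case.''
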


The proofs of these bounds are relegated to Sections~\ref{s:optimal}--\ref{s:asymptotics}.
The exponential bounds in Theorem~\ref{th:exponential} can be thought of as the large deviations approximations of those in Theorem~\ref{th:bounds}.
As we shall see in Section~\ref{s:numerical}, the bounds in Theorem~\ref{th:bounds} are numerically closer than the bounds in Theorem~\ref{th:exponential}.

%The upper bounds provide a safety guarantee for each transaction as a function of the confirmation depth, regardless of the adversary's attack strategy. 
%(The strongest attack strategy remains an open problem in general.)
% While the lower bound is proved only for the limit of ${\cappa-1}\to\infty$, the bound performs remarkably well for small values of $\cappa$ including ${\cappa-1}=1$ for both Bitcoin's and Ethereum's parameters (see Section~\ref{s:numerical}).

%\input{exact}
\newcommand{\safetyII}{type-II safety}

\section{The Private-Mining Attack Is Conditionally Optimal}
\label{s:optimal}

Let us introduce the simple private-mining attack. Its general structure was mentioned explicitly in~\cite{sompolinsky2016bitcoin} and even earlier works. 

\begin{definition}[Private-mining attack against a target transaction $tx$]
    % DG: NOTE IN THE RIGGED MODEL ADV Block MAY BE MINED BY HONEST.
    Starting from time 0, every adversarial block extends a longest blockchain that does not contain $tx$.
    The propagation of every honest block is maximally delayed (i.e., by $\Delta$).
    All adversarial blocks are kept private until the the adversary can violate the safety of $tx$ by publishing all blocks.
    % Starting from time 0, the adversary always mines on a longest blockchain that does not contain $tx$.  The adversary delays the propagation of all honest blocks maximally (i.e., by $\Delta$), and keeps all adversarial blocks private until the attack succeeds in violating the safety of $tx$.
\end{definition}

%We assume the target transaction shows up in every honest node's view at time $\tau$.
Throughout this section, we assume
the following simple condition on block heights is always upheld: 
\begin{condition}
All honest blocks 
% reside at distinct 
are on different heights.
Also, honest blocks mined after time $\tau$ (when the target transaction appears) are higher than honest blocks mined by time $\tau$. 
\label{honest_condition}
\end{condition}

Condition~\ref{honest_condition} always holds if the delay bound $\Delta=0$. 
With $\Delta>0$, honest nodes 
may mine multiple blocks on the same height.
But Condition~\ref{honest_condition} can be upheld by keeping no more than one honest block on each height. This changes the mining statistics and we relegate this discussion to Section~\ref{s:probabilistic}, where we use this technique along with a reduction argument to bound the safety violation probability.
In this section, we will prove that under Condition~\ref{honest_condition}, the private-mining attack is one best attack in the following sense:
%, i.e., if any attack succeeds in violating the safety of the target transaction, so does the private-mining attack.

%%% CONSIDER TO WEAkeN THIS SOMEWHAT.

% \dg{[We need a slightly stronger condition that the height keeps rising in the proof of Lemma 8.  Need to prevent an honest block mined after tau to assume a lower height than block u.  Note that we adopt Condition 1 for simplicity that there is no need to care about whether honest blocks are delta apart from each other.  I feel this is a fundamental discovery we made: private-mining optimal as long as honest blocks do not share the same height, regardless of why it is so.]}

\begin{theorem}
    Under Condition~\ref{honest_condition},
    if any attack succeeds in violating the target transaction's safety, then the private-mining attack also succeeds in violating the target transaction's safety.
\label{thm:best_strategy}
\end{theorem}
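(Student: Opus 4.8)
The statement is best read per sample path: fix the arrival times $t_1,t_2,\dots$ and the honest/adversarial labels of all blocks, and regard an ``attack'' as a choice of how long (within $[0,\Delta]$) to delay each honest block's delivery to each node, which chain each adversarial block extends, and how ties are broken. The plan is to exhibit a purely combinatorial condition $C$ on the heights and labels of the blocks that is \emph{necessary} for any attack to cause a safety violation and \emph{sufficient} for the private-mining attack to cause one; since $C$ depends only on the sample path and not on the attack, this yields the implication. Taking probabilities then shows the private-mining attack attains the supremum of the safety-violation probability, which is exactly how it will be used to upper bound security.

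First I would pin down the honest chain under Condition~\ref{honest_condition}. Since every honest block occupies a distinct height and honest nodes always extend a longest chain in their view, ordering the honest blocks by mining time gives a strictly increasing sequence of occupied heights; in particular the honest blocks alone can support at most one block per height. By the second part of Condition~\ref{honest_condition}, the heights of honest blocks mined after $\tau$ all lie above those mined by time $\tau$, which cleanly separates the target region. Consequently, the block carrying $tx$ is committed by an honest node exactly when $\cappa-1$ further honest blocks have appeared above it on a longest chain, and this fixes the height $h$ and the commit time in terms of the sample path. The role of Condition~\ref{honest_condition} is precisely to prevent honest blocks from being wasted on duplicate heights, so that honest chain growth is one height per honest block.

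Next I would characterize a safety violation. Two distinct committed blocks on a common height $h$ force two chains, each reaching height at least $h+\cappa-1$, that branch below $h$ and disagree at height $h$, one carrying $tx$ and the other not. Because honest blocks cannot populate two competing chains at the same height, the competing (non-$tx$) chain must be built predominantly from adversarial blocks; counting the blocks from the branch point upward over the confirmation window turns the violation into a race in which the adversary must produce at least as many blocks as the honest party across a window of about $2\cappa$ blocks --- the ``$\cappa$ out of $2\cappa$'' structure anticipated after Theorem~\ref{th:exponential}. This race condition, phrased entirely in terms of the per-height labels, is the candidate condition $C$, and the argument above shows any successful attack implies it.

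Finally I would check that the private-mining attack succeeds whenever $C$ holds. Its two defining choices --- delaying every honest block by the full $\Delta$ and routing every adversarial block onto the single private chain that omits $tx$ --- are the extreme points of the feasible set: maximal delay minimizes the height the honest chain has reached by any given time, and concentrating all adversarial blocks on one chain maximizes the height of the competing chain. Hence the private chain reaches the depth needed for a double commit no later than under any other attack, so $C$ forces a private-mining success. The main obstacle is making the two monotonicity claims rigorous in the non-lockstep ($\Delta>0$) regime, where the adversary may reroute in-flight honest blocks and break ties: I must show that under Condition~\ref{honest_condition} these freedoms can never make honest blocks support two competing chains at one height, and that deferring honest deliveries and merging adversarial effort onto one chain never helps the honest side nor hurts the adversary. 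I expect to establish this by an exchange argument that transforms an arbitrary successful attack, one scheduling or routing decision at a time, into the private-mining attack while preserving the safety violation.
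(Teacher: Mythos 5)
Your high-level skeleton matches the paper's: find a condition on the sample path that is necessary for any successful attack and sufficient for the private-mining attack, so that optimality follows path by path. But the proposal never actually produces that condition, and the places where you wave at it contain the real difficulties (and at least one error). First, your claim that the $tx$-block is committed ``exactly when $\cappa-1$ further \emph{honest} blocks have appeared above it'' is wrong: the confirming blocks may be adversarial, so the commit time is attack-dependent, not fixed by the sample path. Second, ``the competing chain must be built predominantly from adversarial blocks'' is precisely the statement that needs proof, and it is subtler than a per-chain count: the paper must choose the violating pair of credible chains $(c,d)$ to \emph{minimize} $h_+=\max(h_c,h_d)$, and that minimality is what forces every height between $\min(h_c,h_d)+1$ and $h_+$ on the longer chain to be adversarial (Lemma~\ref{lemma:hah+}); below that, the argument pairs up blocks height-by-height on the two chains and invokes Condition~\ref{honest_condition} to conclude one of each pair is adversarial, with a separate case analysis around the heights $h_a$, $h_b$, $h_u$ (Lemma~\ref{lemma:adversarial}). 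The resulting necessary condition is a pair of inequalities on $A_{\tau,\tau_+}$ and $H_{\tau,\tau_+-\Delta}$ over carefully chosen time windows, not a ``$\cappa$ out of $2\cappa$'' race --- that structure only emerges later in the probabilistic analysis of Section~\ref{s:probabilistic}, not in this per-path theorem.

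Third, your sufficiency direction rests on two monotonicity claims (maximal delay minimizes honest chain growth; concentrating adversarial blocks maximizes the private chain) that you propose to justify by an exchange argument transforming an arbitrary attack into the private-mining attack one decision at a time. That is a genuinely different route from the paper's, and it is the harder one: such exchange arguments must track how a rerouted honest block changes every node's view and hence all subsequent honest mining, which is exactly the combinatorial explosion the paper's direct counting approach avoids. The paper instead proves a single clean fact about the first stage (the private-mining attack pointwise maximizes the lead at time $\tau$, Lemma~\ref{lem:maxlead}) and then verifies the two success conditions of the private-mining attack, $l'_{\tau}+A_{\tau,\tau_+}\ge H_{\tau,\tau_+-\Delta}$ and $l'_{\tau}+A_{\tau,\tau_+}\ge\cappa$, directly from the counting lemmas. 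As written, your proposal is a plan whose central steps are deferred, so it does not yet constitute a proof.
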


Theorem~\ref{thm:best_strategy} holds for any given honest and adversarial block mining times;
in other words, as long as Condition~\ref{honest_condition} holds, the private-mining attack is a best attack for every {\em sample path}, regardless of statistics.
It was claimed in~\cite[Appendix F]{dembo2020everything} that the private-mining attack is optimal in violating the safety of a predetermined target block in the special case of $\Delta=0$, but the proof therein does not apply to the case when the target block is mined by an adversary.
Thus, the optimality of the private-mining attack has not been fully established thus far even in the case of $\Delta=0$.

% \footnote{The authors of~\cite{dembo2020everything} via private communication.}}
% (Dembo et al.~\cite{dembo2020everything} attempted to prove a similar result that the private-mining attack is optimal in violating the safety of a predetermined target block. But we notice that their proof does not go through if the target block is mined by an adversary. This was confirmed by the authors via private communication. \rl{What should say here?})

Before proving the theorem, we establish some additional terminology and simple facts.
The private-mining attack consists of two stages. 
In the first stage, between time $0$ to $\tau$, the adversary tries to build a ``lead'', formally defined as follows:

\begin{definition}[lead]
    The lead (of the adversary) at  time $t$ is the height of the highest block mined by time $t$ minus the height of the highest honest block mined by $t$.
\end{definition}

By definition, the lead is never negative.
In the private-mining attack up to time $\tau$, if a highest (private) adversarial block is higher than any honest block (the lead is positive), the adversary mines on this highest adversarial block to try to increase the lead; 
otherwise, the lead is zero, and the adversary mines on a highest honest block to try to obtain a positive lead. 
We can show that this strategy maximizes the lead.

\begin{lemma} \label{lem:maxlead}
Under condition~\ref{honest_condition}, the private-mining attack maximizes the lead at all times up to $\tau$.
\end{lemma}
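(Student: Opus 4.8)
The plan is to fix an arbitrary sample path of mining times and honest/adversarial labels and to compare the private-mining attack against an \emph{arbitrary} attack $\pi$ by inducting on the successive mining events. Writing $B(t)$ for the height of the highest block and $H(t)$ for the height of the highest honest block present by time $t$, the lead is $L(t)=B(t)-H(t)$, and I would establish the invariant $L^\star(t)\ge L^\pi(t)$ for every $t\le\tau$, where starred quantities refer to the private-mining attack. Since the lead changes only at mining events, it suffices to verify that this invariant is preserved across each event.

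First I would pin down the structure of the private-mining attack under Condition~\ref{honest_condition}. Because adversarial blocks are kept private, every honest node sees only honest chains, so the chain beneath any honest block consists entirely of honest blocks; hence an honest block at height $h$ forces an honest block on every height $1,\dots,h$. Combined with the distinctness in Condition~\ref{honest_condition}, this yields exactly one honest block per occupied height, i.e.\ the honest blocks form a single chain mined in order of increasing height. Consequently each honest block raises $H^\star$ by exactly one, while the greedy rule (mine on the current highest block) raises $B^\star$ by exactly one on each adversarial block. Thus $L^\star$ evolves as a reflected walk: it increases by one on every adversarial block and becomes $(L^\star-1)^+$ on every honest block. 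For the arbitrary attack $\pi$ the two easy directions are immediate: an adversarial block lifts $B^\pi$ by at most one and fixes $H^\pi$, so $L^\pi$ grows by at most one (matched by private mining), while any honest block can only raise $H^\pi$ and therefore never increases $L^\pi$.

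The delicate case, which I expect to be the main obstacle, is an honest block under $\pi$ landing on a height at or below the current maximum honest height (a ``gap''), leaving $L^\pi$ unchanged while $L^\star$ drops by one; were $L^\star=L^\pi$ at that moment, the invariant would fail. To rule this out I would use Condition~\ref{honest_condition} to argue that the honest blocks are spread out in time. The single-chain structure above makes each honest block extend its honest predecessor under maximal delay, so the predecessor (and, inductively, the entire honest chain below it) must already have propagated to the miner, or else have been mined by that same node, by the time the next honest block appears. Because the mining times are fixed across attacks and no attack can deliver a block later than maximal delay, under every attack $\pi$ the miner of each honest block likewise already knows all earlier honest blocks; it therefore knows a chain reaching the current maximum honest height and must extend to a strictly greater height.

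Hence every honest block is a fresh maximum for $H^\pi$ and decreases a positive $L^\pi$ by at least one, which eliminates the gap case. The induction then closes: on adversarial events both sides move by $+1$ (private mining) versus $\le +1$ (attack $\pi$), and on honest events private mining drops by exactly one whereas $\pi$ drops by at least one whenever its lead is positive, so $L^\star(t)\ge L^\pi(t)$ is preserved at every event and thus for all $t\le\tau$, proving that the private-mining attack maximizes the lead.
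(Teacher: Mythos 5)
Your proof is correct and follows essentially the same route as the paper's: compare the lead processes event by event, observing that under any attack an adversarial block raises the lead by at most one and an honest block lowers it by at least one unless it stays zero, while the private-mining attack achieves exactly these extremes, so its (reflected-walk) lead dominates. The paper takes the ``no gap'' fact for an arbitrary attack directly from Condition~\ref{honest_condition} (every honest block lies strictly above all earlier honest blocks, so each honest arrival is a fresh maximum of the honest height), so your additional propagation/timing argument for that step, while sound, is more work than the paper needs.
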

\begin{proof}
    Let $l_t$ denote the lead at time $t$.  The lead may only change upon block arrivals.
    Upon the mining of every adversarial block, the lead advances by at most 1.
    Upon the mining of every honest block, the lead decreases by at least 1 unless it stays 0.
    Under the private-mining attack up to $\tau$, the lead advances by exactly 1 upon the mining of every adversarial block; the lead decreases by exactly 1 unless it stays 0 upon the mining of every honest block.
    Hence, the private-mining attack achieves the maximum lead at all times up to $\tau$.
\end{proof}

The second stage of the private-mining attack starts at time $\tau$. 
Honest nodes will include $tx$ in the next block on the honest chain.
% \rl{I find this confusing. Under condition 1, there won't be multiple honest chains}
The adversary tries to build a private chain that does not contain the target transaction $tx$. If there ever comes a time after an honest node commits the target transaction, that the adversary's private chain is as long as the public chain, 
then the adversary publishes its private chain and the attack succeeds in violating the safety of $tx$. 
If such an instance never occurs, then the private-mining attack on $tx$ fails. 

For convenience, we define the following notions: %notion of public and credible blockchains. 

\begin{definition}[public]
\label{def:pubic}
A blockchain is public at time $t$ if it is included in all honest nodes' views at time $t$. We say block $b$ is public if and only if blockchain $b$ is public.
\end{definition}

\begin{definition}[credible]
\label{def:credibility}
A blockchain is credible at time $t$ if it is no shorter than any public blockchain at time $t$.
\end{definition}
Equivalently, a credible blockchain at time $t$ must be no shorter than at least one honest node's longest blockchain at time $t$. 
Under the $\cappa$-confirmation rule, a credible blockchain can be used to convince at least one honest node to commit blocks that are $\cappa$ deep in this blockchain.
Furthermore, an honest node attempts to extend only
credible blockchains. 
A block mined by an honest node must be credible at its mining time;
it then loses its credibility at a later time, and cannot regain credibility afterward.

We are now ready to prove Theorem~\ref{thm:best_strategy}.

\begin{proof}[Proof of Theorem~\ref{thm:best_strategy}]
Let us first consider the hypothetical attack that violates the safety of the target transaction $tx$.
Let block $c$ and $d$ be
a % MAY NOT BE UNIQUE the
pair of blocks that minimize $h_+ = \max(h_c,h_d)$ and satisfy the following conditions (see Figure~\ref{fig:hah+} for illustrations):
\begin{enumerate}
    \item Blockchain $c$ is credible at time $t_c$, contains the target transaction in a block $b$, and has height $h_c \geq h_b + {\cappa-1}$, and 
    
    \item Blockchain $d$ is credible at time $t_d$, has height $h_d \geq h_b + {\cappa-1}$, does not contain block $b$, %NEED THIS, OR d CAN BE c.
    and does not contain the target transaction on heights up to $h_b-1$.
\end{enumerate}
Blocks $c$ and $d$ as defined must exist in order for some honest node to commit the target transaction in a block and some honest node (possibly the same one) to commit a different block on the same height.
Since chains $c$ and $d$ are credible at their respective mining times, and higher and higher honest blocks are mined over time, such a pair is determined by the time a block on some height greater than $h_+$ becomes public.

Since $h_c \geq h_b+{\cappa-1}$ and $h_d \geq h_b+{\cappa-1}$, we have
\begin{equation}
h_+ = \max(h_c, h_d) \geq h_b + {\cappa-1}.
\label{eqn:h+} 
\end{equation}
We further define 
\begin{align}
    \tau_+ = \max(t_c,t_d) .
\end{align}

\begin{figure*}
  \centering
  \includegraphics[width=0.95\textwidth]{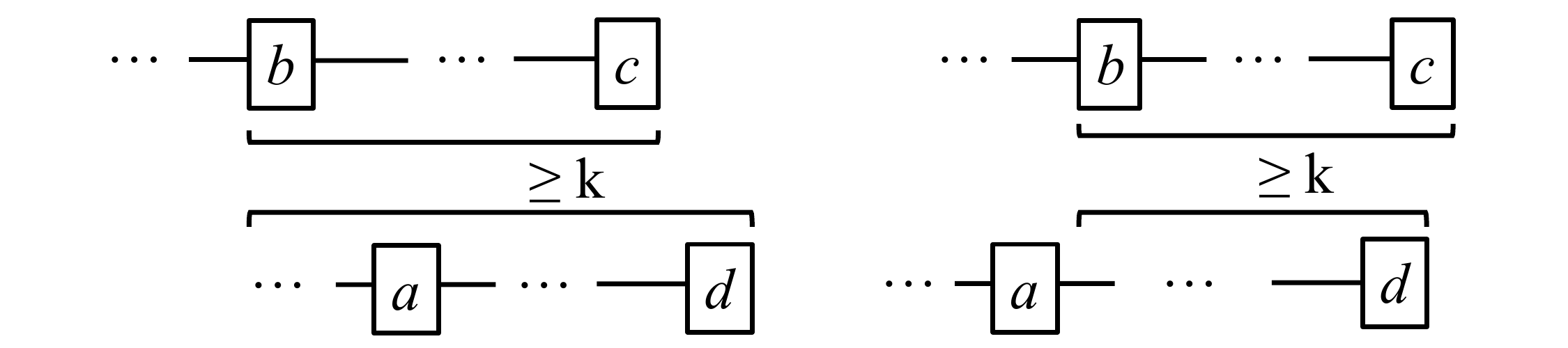}
    \caption{Illustrations of blockchains $c$ and $d$.  Illustrations of the two cases in the proof of Lemma~\ref{lemma:hah+}: $h_a \geq h_b - 1$ (left) and $h_a < h_b - 1$ (right). %(While block $u$ is included in blockchain $c$ in the graphs, block $u$ can be found in either chain $c$ or chain $d$ or neither.)
    }
    \label{fig:hah+}
\end{figure*}

Under the hypothetical attack, let block $u$ be the highest honest block mined by
time $\tau$, let $l_{\tau}$ be the lead at time $\tau$, and let block $a$ be the highest block on chain $d$ at time $\tau$. Note that $l_{\tau} + h_u$ is the height of the highest block at time $\tau$ while $a$ is one block at time $\tau$. Thus,
\begin{equation}
l_{\tau} + h_u \geq h_a.
\label{eqn:lt0} 
\end{equation}

Let $H_{t,s}$ be the number of honest blocks mined during $(t, s]$.
Let $A_{t,s}$ be the number of adversarial blocks mined during $(t, s]$.
% We next establish the following results.

\begin{lemma}
$H_{\tau,\tau_+-\Delta} \leq h_+ - h_u.$
\label{lemma:honest}
\end{lemma}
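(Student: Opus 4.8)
The plan is to turn the count of honest blocks in $(\tau,\tau_+-\Delta]$ into a count of admissible block heights. By Condition~\ref{honest_condition}, distinct honest blocks occupy distinct heights, and every honest block mined after $\tau$ is strictly higher than every honest block mined by time $\tau$; in particular it is strictly higher than $u$. Thus each honest block counted by $H_{\tau,\tau_+-\Delta}$ has a \emph{distinct} height lying in $\{h_u+1,h_u+2,\dots\}$. It therefore suffices to prove the complementary upper cap $h_w\le h_+$ for each such block $w$: the number of distinct integers in $\{h_u+1,\dots,h_+\}$ is at most $h_+-h_u$, which is precisely the claimed bound.

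The crux is establishing the height cap $h_w\le h_+$ for an arbitrary honest block $w$ whose mining time satisfies $t_w\le\tau_+-\Delta$. Since $w$ is honest, it extends a longest chain known to its miner and is immediately gossiped, so by the delay bound chain $w$ is received by every honest node no later than $t_w+\Delta\le\tau_+$; that is, chain $w$ is public at time $\tau_+$, and it remains public thereafter. Now write $\tau_+=\max(t_c,t_d)=t_e$ for whichever block $e\in\{c,d\}$ attains the maximum. By construction chain $e$ is credible at its own mining time $t_e=\tau_+$, so by Definition~\ref{def:credibility} it is no shorter than any chain that is public at time $\tau_+$. Applying this to the public chain $w$ gives $h_e\ge h_w$, and since $h_e\le\max(h_c,h_d)=h_+$ we conclude $h_w\le h_+$.

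Combining the two observations, every honest block mined in $(\tau,\tau_+-\Delta]$ carries a distinct height that is strictly greater than $h_u$ and at most $h_+$. There are at most $h_+-h_u$ such heights, so $H_{\tau,\tau_+-\Delta}\le h_+-h_u$, as desired.

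The step I expect to be most delicate is invoking credibility at exactly the right instant. I must use both that publicness persists in time (once all honest nodes hold chain $w$ at $t_w+\Delta$, it is still public at the later time $\tau_+$) and that the block $e$ realizing $\tau_+$ is credible precisely at $t_e=\tau_+$, so that the comparison $h_e\ge h_w$ is legitimate; the matching of which of $t_c,t_d$ equals $\tau_+$ is what makes this clean. A boundary case also deserves a word: when $\tau_+-\Delta\le\tau$ the interval is empty and the statement reduces to $h_+\ge h_u$. I expect this to follow from the fact that the credible chains $c,d$ have height at least $h_b+\cappa-1$ and, being credible at their mining times, must be no shorter than any then-public honest chain, which forces $h_+$ to dominate the honest height $h_u$ reached by time $\tau$; I would verify this separately rather than fold it into the main counting argument.
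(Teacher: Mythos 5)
Your proof is correct and takes essentially the same route as the paper's: honest blocks mined in $(\tau,\tau_+-\Delta]$ occupy distinct heights above $h_u$ by Condition~\ref{honest_condition}, and those heights are capped at $h_+$ because such blocks are public by $\tau_+$ while the later of $c,d$ is credible at that instant. Your write-up merely makes explicit the publicness/credibility bookkeeping and the empty-interval case $h_+\ge h_u$ that the paper's three-sentence proof leaves implicit.
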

\begin{proof}
Since block $u$ is an honest block mined by time $\tau$, all honest blocks mined after $\tau$ will have height at least $h_u+1$ (Condition~\ref{honest_condition}). 
Since one of $c$ or $d$ is credible at time $\tau_+$ and has height at most $h_+$, all honest blocks mined before $\tau_+-\Delta$ must have height at most $h_+$. 
Finally, since honest blocks do not share a height (Condition~\ref{honest_condition}), the lemma is proved.
\end{proof}

\begin{lemma} 
  Every height from $h_a+1$ to $h_+$ contains an adversarial block mined during $(\tau,\tau_+]$.
  \label{lemma:hah+}
\end{lemma}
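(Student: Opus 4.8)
The plan is to prove the stronger counting statement that each height $i$ with $h_a<i\le h_+$ hosts \emph{some} adversarial block mined in $(\tau,\tau_+]$; since distinct heights host distinct blocks, this immediately yields $A_{\tau,\tau_+}\ge h_+-h_a$, which is the form in which the lemma will be combined with Lemma~\ref{lemma:honest} and inequality~\eqref{eqn:lt0}. The first step is a timing observation. Because block $a$ is the highest block of chain $d$ present at time $\tau$, every block of chain $d$ at a height in $(h_a,h_d]$ is mined after $\tau$, and since the tip of chain $d$ is mined at $t_d\le\tau_+$, all such blocks are mined within $(\tau,\tau_+]$. The same window also contains the blocks of chain $c$ lying above its fork with chain $d$: these sit strictly above block $b$, which carries the target transaction and is therefore mined after $\tau$, while chain $c$'s tip is mined by $t_c\le\tau_+$.

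Next I would dispose of the heights strictly below $h_b$, which appear in the range only in the case $h_a<h_b-1$. For such a height $i$ with $h_a<i<h_b$, the chain-$d$ block at height $i$ extends the prefix of chain $d$ of height $i-1\le h_b-1$, which by the second defining condition contains no occurrence of the target transaction. An honest miner would then insert the transaction into this block, placing it at height $i<h_b$ on chain $d$ and contradicting that condition; hence the block is adversarial. This settles every height strictly below $h_b$, so in the case $h_a\ge h_b-1$ there is nothing to prove here.

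The heart of the argument is the range of heights $\ge h_b$, where the chain-$d$ block need not itself be adversarial, since an honest node may well extend chain $d$ and even append the transaction at a height $\ge h_b$. The device that saves the count is Condition~\ref{honest_condition}: at most one honest block can occupy any given height. Chains $c$ and $d$ carry different blocks at height $h_b$ (namely $b$ versus a block $b'\ne b$), hence different blocks at every common height above $h_b$; so whenever the chain-$d$ block at such a height is honest, the distinct chain-$c$ block at the same height is forced to be adversarial, and vice versa. Either way the height hosts an adversarial block, which by the timing step is mined in $(\tau,\tau_+]$. At height $h_b$ itself the two candidates are $b'$ on chain $d$ and $b$ on chain $c$; if the chain-$d$ block there happens to be honest, then $b$ cannot also be honest, so $b$ --- mined after $\tau$ and by $t_c\le\tau_+$ --- is the required adversarial block.

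The step I expect to be the main obstacle is guaranteeing this coverage uniformly up to $h_+$ with the correct timing. The clean use of the conflicting chain presumes that for each height in $(h_b,h_+]$ at least one of the two chains reaches it carrying an adversarial block; this is immediate when both chains extend past the height, but needs care at heights above the shorter of $h_c$ and $h_d$, where only one chain is available and its block might be honest. Resolving this is where the minimality of $h_+$ in the choice of the pair $(c,d)$ --- together with the fact that honest blocks mined after $\tau$ are confined to distinct heights above $h_u$ (Condition~\ref{honest_condition}) --- must be invoked to rule out an uncompensated honest block, thereby completing the two cases $h_a\ge h_b-1$ and $h_a<h_b-1$ illustrated in Figure~\ref{fig:hah+}.
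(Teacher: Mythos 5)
Your argument for all heights up to $\min(h_c,h_d)$ is correct and is essentially the paper's: the timing observation, the treatment of heights in $(h_a, h_b)$ via the fact that an honest block extending a $tx$-free prefix of chain $d$ would have included $tx$, and the pigeonhole at common heights $\ge h_b$ where chains $c$ and $d$ carry distinct blocks so Condition~\ref{honest_condition} forces one of them to be adversarial. However, the heights in $(\min(h_c,h_d), h_+]$ are a genuine gap, and you say so yourself: you correctly guess that the minimality of $h_+$ is the relevant tool, but you do not produce the argument, and the mechanism you gesture at (honest blocks after $\tau$ being confined to distinct heights above $h_u$) is not what closes it.

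What actually closes it is the \emph{credibility of the parent of an honest block}. Suppose without loss of generality $h_d > h_c$ and some block $f$ on chain $d$ at a height in $(h_c, h_d]$ were honest. An honest node only extends a credible chain, so $f$'s parent $e$ is credible at time $t_e$; being a prefix of chain $d$, chain $e$ does not contain block $b$ and contains no occurrence of $tx$ up to height $h_b-1$; and $h_e = h_f - 1 \ge h_c \ge h_b + \cappa - 1$. Hence the pair $(c,e)$ satisfies both defining conditions with $\max(h_c,h_e) = h_f - 1 < h_+$, contradicting the minimality of $h_+$ in the choice of $(c,d)$. Therefore every block on the longer chain above $\min(h_c,h_d)$ is adversarial (and mined in $(\tau,\tau_+]$ by your timing step). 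Without this step the lemma's coverage up to $h_+$ --- and hence the count $A_{\tau,\tau_+} \ge h_+ - h_a$ that feeds into Corollary~\ref{coro:adversarial} --- is not established.
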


\begin{proof}
We first show every height from $h_a+1$ to $\min(h_c,h_d)$ contains an adversarial block mined during $(\tau,\tau_+]$.
First, all blocks in blockchain $c$ on height $h_b$ and above are mined after during $(\tau,\tau_+]$; so are all blocks in blockchain $d$ on height $h_a+1$ and above.
We consider two cases as illustrated in Figure~\ref{fig:hah+}.
If $h_a \geq h_b-1$, there exist two blocks on every height from $h_a+1$ to $\min(h_c,h_d)$ that are mined during $(\tau,\tau_+]$.
Given condition~\ref{honest_condition}, one block on every height from $h_a+1$ to $\min(h_c,h_d)$ must be an adversarial block mined during $(\tau,\tau_+]$.
If $h_a < h_b - 1$, by the same argument, one block on every height from $h_b$ to $\min(h_c,h_d)$ must be an adversarial block mined during $(\tau,\tau_+]$.
Furthermore, blocks extending $a$ from height $h_a+1$ to $h_b-1$ do not contain the target transaction $tx$ due to the definition of blockchain $d$.
These must be adversarial blocks because honest nodes would have included $tx$.
%Thus, there must be an adversarial block at every height from $h_a+1$ to $h_b$ mined during $(\tau,\tau_+]$.

Next, we show every height from $\min(h_c,h_d)+1$ to $\max(h_c,h_d)$ also contains an adversarial block mined during $(\tau,\tau_+]$.
If $h_c=h_d$, the statement is vacuous. 
If $h_d>h_c$, we can show that blocks on chain $d$ between height $h_c+1$ to $h_d$ are adversarial blocks. 
Suppose for the sake of contradiction that one of these blocks, say block $f$, is an honest block.
Then block $f$'s parent, denoted as block $e$, must be credible at time $t_e$, does not contain $tx$ up to height $h_b-1$, and has height
\begin{align}
h_e = h_f - 1 \geq h_c \geq h_b + {\cappa-1}.
\end{align}
Thus, $\max(h_c,h_e) = h_+-1$.  This contradicts with $h_+$ being minimized by the pair of blocks $c$ and $d$. 
The case of $h_c > h_d$ is similar.

Thus, we have that every height from $h_a+1$ to $h_+$ contains an adversarial block mined during $(\tau,\tau_+]$. 
\end{proof}

\begin{figure*}
  \centering
  \includegraphics[width=0.95\textwidth]{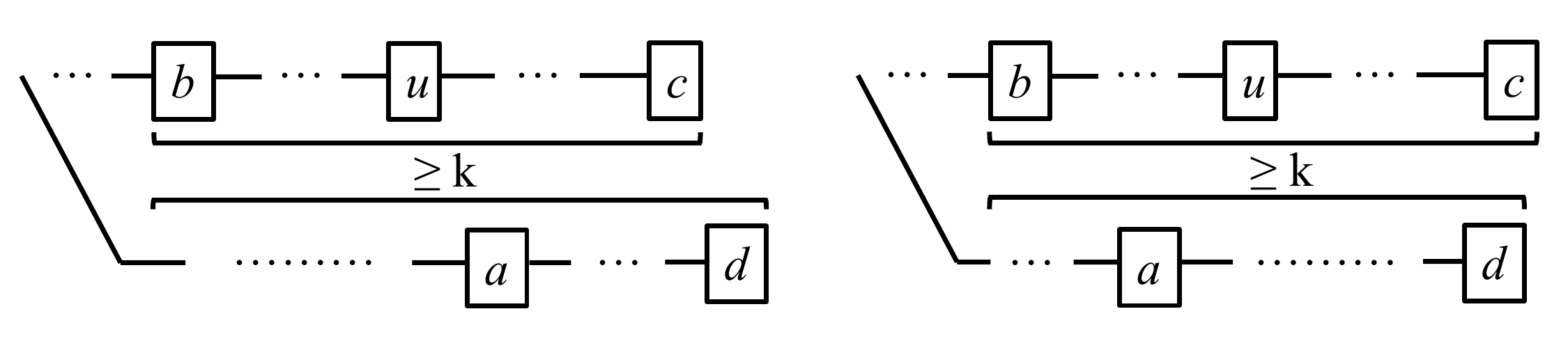}
    \caption{Illustrations of the two cases in the proof of Lemma~\ref{lemma:adversarial}: $h_a \geq h_u$ (left) and $h_a < h_u$ (right). (While block $u$ is included in blockchain $c$ in the graphs, block $u$ can be found in either chain $c$ or chain $d$ or neither. In the $h_a < h_b$ case, though $h_a$ is illustrated to be larger than $h_b$ in the graph, we can have $h_a \leq h_b$.)
    }
    \label{fig:adversarial}
\end{figure*}

\begin{lemma} 
$A_{\tau,\tau_+} \geq h_+ - h_a + \max(h_u - h_b + 1, 0)$.
\label{lemma:adversarial}
\end{lemma}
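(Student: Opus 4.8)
The plan is to lower-bound $A_{\tau,\tau_+}$ by exhibiting many \emph{distinct} adversarial blocks mined during $(\tau,\tau_+]$, building directly on Lemma~\ref{lemma:hah+}. That lemma already supplies one adversarial block mined in $(\tau,\tau_+]$ at each height $h_a+1,\dots,h_+$, accounting for the $h_+-h_a$ term. It therefore remains, in the regime $h_u\ge h_b$, to produce an additional $h_u-h_b+1$ distinct adversarial blocks mined in $(\tau,\tau_+]$; when $h_u<h_b$ the claimed bound reduces to Lemma~\ref{lemma:hah+} and there is nothing further to show.

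The engine for producing adversarial blocks is a single observation that combines Condition~\ref{honest_condition} with the fact that $tx$ first appears at time $\tau$: any block mined after $\tau$ whose height is at most $h_u$ must be adversarial, since by Condition~\ref{honest_condition} every honest block mined after $\tau$ has height strictly above $h_u$. Now block $b$ carries $tx$, so $b$ and every higher block of chain $c$ are mined after $\tau$; hence each chain-$c$ block on a height in $[h_b,h_u]$ is adversarial and lies in $(\tau,\tau_+]$. Symmetrically, each chain-$d$ block on a height in $[h_a+1,h_u]$ sits above block $a$, is therefore mined after $\tau$, and is adversarial. Crucially, because chain $d$ omits block $b$, chains $c$ and $d$ have already diverged below height $h_b$, so at every common height $\ge h_b$ their blocks are distinct. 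I would then split on the two cases drawn in Figure~\ref{fig:adversarial}. If $h_a\ge h_u$, the heights $[h_b,h_u]$ lie at or below $h_a$ and are thus disjoint from the heights $[h_a+1,h_+]$ of Lemma~\ref{lemma:hah+}; the chain-$c$ adversarial blocks on $[h_b,h_u]$ are new, giving the extra $h_u-h_b+1$. If $h_a<h_u$, the two height ranges overlap; on each overlapping height I would assign the chain-$d$ block to Lemma~\ref{lemma:hah+}'s count and keep the distinct chain-$c$ block as the additional one, while on heights of $[h_b,h_u]$ below $h_a+1$ the chain-$c$ block is automatically new.

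I expect the main obstacle to be the bookkeeping that certifies \emph{distinctness} of all these blocks and their membership in $(\tau,\tau_+]$ at the boundary heights, namely those exceeding $\min(h_c,h_d)$ where only the taller of the two chains survives and the naive ``one block per chain per height'' accounting breaks down. Resolving this is exactly where I would invoke the minimality of the pair $(c,d)$---the same device used at the end of the proof of Lemma~\ref{lemma:hah+} to force every block on the taller chain above $\min(h_c,h_d)$ to be adversarial---together with \eqref{eqn:lt0}, which ties $h_a$ to $h_u+l_\tau$ and limits how far chain $d$ could already have advanced by time $\tau$. The substance of the argument is this careful matching of the two families of adversarial blocks across the overlap, rather than any single inequality.
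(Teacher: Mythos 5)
Your proposal matches the paper's proof essentially step for step: the same reduction to the case $h_u\ge h_b$, the same case split on $h_a\ge h_u$ versus $h_a<h_u$, the same identification of adversarial blocks (chain-$c$ blocks on heights $[h_b,h_u]$ and chain-$d$ blocks on heights $[h_a+1,h_u]$, all adversarial because honest blocks mined after $\tau$ exceed height $h_u$), and the same distinctness argument via chain $d$ omitting block $b$, combined with Lemma~\ref{lemma:hah+} for the remaining heights up to $h_+$. The counting works out to $(h_+-h_a)+(h_u-h_b+1)$ exactly as in the paper, so this is correct and not a genuinely different route.
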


\begin{proof}
In the case of $h_u - h_b + 1 \leq 0$, this lemma becomes $A_{\tau,\tau_+} \geq h_+ - h_a$, which follows directly from Lemma \ref{lemma:hah+}.
It remains to prove $A_{\tau,\tau_+} \geq (h_+ - h_a) + (h_u - h_b + 1)$ in the case of $h_u \geq h_b$.
In this case, block $b$ and its successors in chain $c$ up to height $h_u$ must be adversarial, because honest nodes will only mine on heights higher than $h_u$ after time $\tau$. 
We consider two cases as illustrated in Figure~\ref{fig:adversarial}.

If $h_a \geq h_u$, then there is an adversarial block mined during $(\tau,\tau_+]$ on every height from $h_b$ to $h_u \leq h_a$, and (due to Lemma~\ref{lemma:hah+}) on every height from $h_a+1$ to $h_+$, giving $A_{\tau,\tau_+} \geq (h_+ - h_a) + (h_u - h_b + 1)$.

If $h_a < h_u$, then the successors of block $a$ in chain $d$ up to height $h_u$ must also be adversarial, because honest nodes will only mine on heights higher than $h_u$ after time $\tau$. 
Also note block $b$ and its successors are not in chain $d$.
Thus, at least $(h_u-h_b+1)+(h_u-h_a)$ adversarial blocks mined during $(\tau,\tau_+]$ on heights no greater than $h_u$. 
Furthermore, at least one adversarial block is mined during $(\tau,\tau_+]$ on every height from $h_u+1$ to $h_+$ (as a consequence of Lemma~\ref{lemma:hah+}). Thus, $A_{\tau,\tau_+} \geq (h_u-h_b+1)+(h_u-h_a)+(h_+-h_u)= (h_+ - h_a) + (h_u - h_b + 1)$ as desired. 
\end{proof}

The following corollary is straightforward from Lemma~\ref{lemma:adversarial}:

\begin{corollary}
$A_{\tau,\tau_+} \geq h_+ - h_a$ and
$A_{\tau,\tau_+} \geq h_+ - h_a + h_u - h_b + 1$.
\label{coro:adversarial}
\end{corollary}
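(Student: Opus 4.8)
The plan is to observe that the corollary merely unpacks the single inequality of Lemma~\ref{lemma:adversarial} into its two natural weakenings. Recall that Lemma~\ref{lemma:adversarial} asserts
\[
A_{\tau,\tau_+} \geq h_+ - h_a + \max(h_u - h_b + 1, 0),
\]
and the two claimed inequalities differ only in how the $\max$ term is bounded below. The whole argument rests on the elementary fact that $\max(x,0) \geq 0$ and $\max(x,0) \geq x$ hold simultaneously for every real $x$.

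First I would apply the trivial bound $\max(x,0) \geq 0$ with $x = h_u - h_b + 1$. Substituting into Lemma~\ref{lemma:adversarial} immediately yields
\[
A_{\tau,\tau_+} \geq h_+ - h_a,
\]
which is the first claim. Next I would apply the equally trivial bound $\max(x,0) \geq x$, again with $x = h_u - h_b + 1$, to obtain
\[
A_{\tau,\tau_+} \geq h_+ - h_a + h_u - h_b + 1,
\]
which is the second claim.

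There is no genuine obstacle here: the two assertions are precisely the two statements one gets by replacing the $\max$ with each of its two arguments, and both weakenings are valid because $\max(h_u - h_b + 1, 0)$ dominates each argument individually. The only point worth flagging is that the second inequality is asserted \emph{unconditionally}; when $h_u - h_b + 1 < 0$ it is simply the weaker of the two bounds, so stating both together (rather than case-by-case as in Lemma~\ref{lemma:adversarial}) loses no information and is exactly the form needed in the subsequent counting arguments.
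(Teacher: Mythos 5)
Your proposal is correct and matches the paper's intent exactly: the paper states the corollary is ``straightforward from Lemma~\ref{lemma:adversarial}'', and the two bounds $\max(x,0)\ge 0$ and $\max(x,0)\ge x$ applied to $x=h_u-h_b+1$ are precisely the intended derivation.
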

%\begin{proof}
%Straightforward from Lemma~\ref{lemma:adversarial}.
%\end{proof}

Let us now consider the private-mining attack.
Under this attack, let block $u'$ be the highest honest block at time $\tau$ and let $l'_{\tau}$ be the lead at time $\tau$.
The first honest block after $\tau$ will be on
height $h_{u'}+1$ and it will contain the target transaction.
At time $\tau_+$, the public honest chain has length $h_{u'} + H_{\tau,\tau_+-\Delta}$ and the adversary's private chain has length $h_{u'} + l'_{\tau} + A_{\tau,\tau_+}$.
% For 
The private-mining attack succeeds as long as the following two conditions hold:
\begin{enumerate}
    \item[(i)] the adversary's private chain is no shorter than the public honest chain, i.e., $l'_{\tau} + A_{\tau,\tau_+} \geq H_{\tau,\tau_+-\Delta}$;
    \item[(ii)] the adversary's private chain is at least  $\cappa-1$ higher than the target transaction, i.e., $l'_{\tau} + A_{\tau,\tau_+} \geq {\cappa}$.
\end{enumerate}

%We first show (i). By Lemma~\ref{lemma:honest}, $H_{\tau,\tau_+-\Delta} \leq h_+ - h_u.$ By Lemma~\ref{lemma:adversarial}, $A_{\tau,\tau_+} \leq h_+ - h_a.$ By Lemma~\ref{lem:maxlead}, $l'_{\tau} \geq l_{\tau}$.

Both conditions follow directly from the results we just established.
For (i),
\begin{align}
l'_{\tau} + A_{\tau,\tau_+} - H_{\tau,\tau_+-\Delta} &\geq l_{\tau} + (h_+ - h_a) - (h_+ - h_u) \label{eqn:C1} \\
&= l_{\tau} + h_u - h_a  \\
&\geq 0 \label{eqn:C2} 
\end{align}
where \eqref{eqn:C1} is due to Lemma~\ref{lem:maxlead},~Lemma~\ref{lemma:honest}~and~Corollary~\ref{coro:adversarial}, and \eqref{eqn:C2} is due to \eqref{eqn:lt0}.
For (ii),
\begin{align}
l'_{\tau} + A_{\tau,\tau_+}
&\geq l_{\tau} + h_+ - h_a + h_u-h_b+1 \label{eqn:D1} \\
% &\geq l_{\tau} + {\cappa-1} + h_b - h_a + h_u-h_b+1 \label{eqn:D2} \\
&\ge l_{\tau} + \cappa - h_a + h_u \label{eqn:D2} \\
&\geq {\cappa} \label{eqn:D3}
\end{align}
where \eqref{eqn:D1} is due to Lemma~\ref{lem:maxlead} and Corollary~\ref{coro:adversarial},~\eqref{eqn:D2} is due to \eqref{eqn:h+}, and~\eqref{eqn:D3} is due to~\eqref{eqn:lt0}.
Hence the proof of Theorem~\ref{thm:best_strategy}.
\end{proof}

\section{%Probabilistic Safety Guarantee}
The Safety Violation Probability: Proof of Theorem~\ref{th:bounds}}
\label{s:probabilistic}

In this section, we differentiate the notion of {\em blocks mined by honest (resp.\ adversarial) nodes} and {\em honest (resp.\ adversarial) blocks}.
In particular, we modify the canonical model of Section~\ref{sec:model:formal} to define a \emph{rigged} model: We convert selected blocks mined by honest nodes into adversarial blocks, in addition to blocks mined by adversarial nodes.  Only the remaining blocks mined by honest nodes are honest blocks.  We let the adversary be informed of
which
blocks are converted.
% The main idea is as follows: Without changing the block arrival times, we convert some honest blocks into adversarial blocks such that all remaining honest blocks are on different heights.
%On one hand, 
The rigged model makes the adversary strictly more powerful because the adversary has the option of behaving honestly with these converted blocks. 

We judiciously select blocks for conversion so that all honest blocks are always on different heights.
Theorem~\ref{thm:best_strategy} now guarantees that the private-mining attack is a best attack in the rigged model.
We then upper bound the safety violation probability of the private-mining attack in the rigged model,
which will serve as an upper bound on any attack in the canonical model.

% In the previous section we have established that private-mining attack is an optimal attack as long as Condition~\ref{honest_condition} is upheld.
% every honest block is the unique honest block on its height. 
% In this section, we use the preceding observation to develop an upper bound on the probability of safety violation.

%\dg{Hence Condition 1 is always upheld.  In addition, we reveal the mining times of all honest and adversarial blocks to the adversary at time $0$.  With complete foresight, the adversary can carry out non-causal attacks.  In fact, the adversary can assemble the blocks into blockchains to violate the safety of the target transaction as long as this is feasible. On one hand, this enlarges the set of adversarial strategies and hence} 

In order to specify %honest 
blocks for conversion, we introduce the following notions:

\begin{definition}[Tailgaters and laggers]
Let block $j$ be mined at time $t_j$.
If no other block is mined during $(t_j-\Delta, t_j]$, then block $j$ is a lagger; otherwise, block $j$ is a tailgater.
\end{definition}

Now each block has two attributes: whether it is mined by an honest or an adversarial node, and whether it is a tailgater or a lagger. 
Each block is mined by an honest node
with probability $\rho$ and 
by an adversarial node with probability $1-\rho$.
Recall that the inter-arrival times in a Poisson process are 
independent and identically distributed
and follow an exponential distribution with the same rate parameter.
Thus, each block is a lagger with probability $g=e^{-\lambda\Delta}$ and a tailgater with probability $1-g$.
Moreover, whether a block is a lagger or tailgater is independent of 
whether it is mined by an honest or an adversarial node
as well as all other blocks' attributes.

We convert all the tailgaters mined by honest nodes into adversarial blocks.
In other words, only laggers mined by honest nodes are honest blocks; all other blocks are adversarial blocks. 
It is not difficult to see that, in this rigged model,
every block is honest with probability $p=g\rho$ and is adversarial with probability $q = 1-p$, independent of all other blocks.
Hence the sequence of honest/adversarial attributes of all blocks form a memoryless binary process.
% of independent and identically distributed Bernoulli random variables with parameter $p=g\rho$ (the fraction of honest blocks).
Like regular adversarial blocks, the adversary can decide what predecessor block 
a converted block extends, and when it is revealed to honest nodes.
A crucial property is that now every honest block is received by all honest nodes before the next honest block is mined.
Therefore, every honest block will be on
a higher height than all previous honest blocks, i.e., Condition~\ref{honest_condition} is always upheld in the rigged model.

%For convenience, let us define \emph{good} blocks to be honest laggers. Adversarial blocks and honest tailgater blocks are \emph{bad} blocks.
%\dg{To clarify, after the conversion, only the original honest laggers are still honest; all other blocks are adversarial.} \rl{If we introduce the term good/bad, I am against having this sentence. }

\subsection{The Upper Bound}

Let $F$ denote the event that the private-mining attack violates the target transaction's safety in the rigged model.
Now that the conversions make the private-mining attack a best attack (Theorem~\ref{thm:best_strategy}), it remains to upper bound $\Pr(F)$.  

% \dg{GOOD CATCH.  LET ME SUGGEST AN ALTERNATIVE, PERHAPS EVEN NEATER TREATMENT AS FOLLOWS:}
% \rl{Cool, I like this treatment. I prefer to drop the subscript n and N since they are not used that often, and Section 5.2 again reverts back to B. The minor downside is that the defn of M will again be with texts, but I think it is fine. If you find it unclear, we can bring forward the discussion that M is random walk with p and q. See modified version below.}

Let $h$ denote the height of the highest honest block mined by time $\tau$.  The first honest block mined after $\tau$ (on height $h+1$) includes the target transaction $tx$.  Let $L$ denote the lead of the adversary at time $\tau$, so the private chain's height is $h+L$ at $\tau$.

If $L \ge \cappa$, the private chain at time $\tau$ is already high enough to commit the adversarial block on height $h+1$.  
As soon as the honest chain grows by $\cappa$ blocks after $\tau$ to commit $tx$, the adversary releases its private chain to violate its safety. % THE HONEST CHAIN NEEDS NOT BE PUBLIC.

If $L < \cappa$, the private chain at $\tau$ is not high enough to commit on height $h+1$.
Let $B$ denote the number of adversarial blocks out of the first $\max(2\cappa-L, 0)$ blocks mined after time $\tau$.
%For every $n=0,1,\dots$, let $B_n$ denote the number of adversarial blocks out of the first $n$ blocks mined after time $\tau$.
%Let $N=\max(2{\cappa}-L,0)$. Note that $B_n$ and $N$ are non-negative for all possible values of $L$ (including when $L\ge\cappa$).
Note that $B$ is non-negative for all possible values of $L$.
If $L<\cappa$ but $L+B\ge\cappa$, then by the time $2\cappa-L$ blocks are mined after $\tau$, the private chain's height $h+L+B\ge h+\cappa$ is sufficient to commit $tx$ at height $h+1$, and it is also no shorter than the honest chain, whose height is
$h + (2\cappa - L - B) \le h + \cappa$.
Again, the adversary has been able to violate $tx$'s safety.

If $L+B<\cappa$, the private chain is still not high enough to commit on height $h+1$ by the time $2\cappa-L$ blocks are mined after $\tau$.  At this point, the public honest chain is also higher than the private chain by at least
\begin{align}
    (h + (2\cappa-L-B)-1) - (h+L+B)
    &= 2(\cappa-L-B) -1
    % &= 2\cappa -1 - L - 2B_N.
\end{align}
where the ``$-1$'' is because the highest honest block is not necessarily public.  A necessary condition for violating the safety of $tx$ is that the adversary gains enough advantage to make up for this deficit at a later time.  (This is not a sufficient condition because the highest honest block may be public to add one to the actual deficit.)
Let $M$ denote the maximum advantage the adversary ever gains during subsequent mining. 
%which can be expressed as
%\begin{align}
%    M = \max_{n\ge 0} \, ( 2 (B_{N+n} - B_N) - n ).
%\end{align}
Specifically, $M$ is the maximum reach of the following simple random walk:
The walk starts at $0$; it increments by one upon the mining of an adversarial block (with probability $q$); and it decrements by one upon the mining of an honest block (with probability $p$).
The necessary condition is then $M\ge 2(\cappa-B-L)-1$.

In summary, the safety violation event $F$ occurs only if one of the following three mutually exclusive events occurs:
\begin{itemize}
    \item $L \ge \cappa$,
    \item $L < \cappa$ but $L+B \ge \cappa$, % or
    \item $L+B < \cappa$ but $M \ge 2(\cappa-L-B)-1$. 
\end{itemize}
As $L$, $B$, $M$ are all non-negative, the union of the preceding three events can be concisely written as
\begin{align}
    2L + 2B + M \ge {2\cappa-1}.    
\end{align}
In the remainder of this subsection, we analyze the probability of this event using the joint distribution of $(L,B,M)$.  
%For ease of notation, we denote $B_N$ as $B$ in the remainder of this paper.

By definition, the lead $L$ is determined by block mining times before $\tau$, $B$ is determined by the time $\max(2k-L,0)$ additional blocks are mined after $\tau$, and $M$ is determined by mining times after that.  Due to the memoryless nature of the rigged mining process, $M$ is independent of $(L,B)$.
Under the condition~\eqref{eq:tolerance}, as the maximum reach of the random walk, the distribution of $M$ is geometric~\cite[equation (7.3.5)]{ross1996stochastic}: %.  Specifically, 
%\rl{Seems inconsistent with Thm 2. Prefer to define Pr(M=i) to be P1(i).}
%\dg{THE CURRENT definition makes Thm 2 more elegant.  I feel it's worth to sacrifice here than in the theorem.}
%\rl{I write out both versions at the end of this section. They seem about the same to me.}
%\dg{YOU WROTE EXACTLY THE CURRENT EQUATION IN THM2 - THIS IS THE MOST DESIRABLE FINAL FORM I THINK.  THE THING IS, IF WE WANT THIS FORM, THEN THE CCDF (WITHOUT EQUALITY SIGN) EVALUATED AT k NEEDS TO BE $Pr(L\ge k)$, CAUSING THE ONE OFFSET BETWEEN L AND THE CCDF.  THAT'S WHY WE DEFINE THE PMF (23) WITH AN OFFSET.  (27)-(30) USES l TO OBTAIN (30), WHICH IS EVENTUALLY TURNED INTO THE FORM IN THE THEOREM USING i=l+1.  I'D BE HAPPY TO CHAT MORE ABOUT THIS.}
\begin{align} \label{eq:PLi}
    \Pr(M = l) = P_1(l+1;p) % = \left( \qop \right)^i \left( 1 - \qop \right) 
    % \Pr(L = i) &= P_1(i;p) % = \left( \qop \right)^i \left( 1 - \qop \right) 
\end{align}
and
\begin{align} \label{eq:FLi}
    \Pr(M\ge l) = \overline{F}_1(l;p)  % = \left(\qop\right)^{i+1} 
    % \Pr(L>i) &= \overline{F}_1(i+1;p)  % = \left(\qop\right)^{i+1} 
    % \Pr(L>i) &= \overline{F}_1(i;p)  % = \left(\qop\right)^{i+1} 
\end{align}
for every $l=0,1,\dots$, where $P_1$ and $\overline{F}_1$ are defined in~\eqref{eq:P1} and~\eqref{eq:F1}.

The lead $L$ is exactly the state of a (continuous-time) birth-death process: It starts at state 0; the mining of each adversarial block corresponds to a birth; and the mining of each honest block corresponds to a death. The process is bounded by zero from below.
% %Thus, 
% Similarly, the lead $L$ is the position of a (second) % one-dimensional 
% random walk that increments with probability $q$ and decrements 
% \dg{(if strictly positive)} %(or stays at 0) 
% with probability $p$ at each step.
It is fair to assume that $\tau$ is not small to allow sufficient mixing in the Markov process, 
so that the distribution of $L$ is identical to the stationary distribution of this birth-death process~\cite[p.~254]{ross1996stochastic} %\rl{which?}  random walk~\cite[Chapter 9, Theorem 2.1]{ross2010first}
under the condition~\eqref{eq:tolerance}.
% (For the stationary distribution to exist, $p>1/2$ is required; otherwise, the lead $L$ will keep increasing.)
Interestingly, $L$ has exactly the same geometric distribution as $M$.

Conditioned on $L=l<2\cappa$, $B$ is a binomial random variable with parameters $(2{\cappa}-l,q)$, hence %\dg{$-i$ NOT $-1$?}\rl{right}
\begin{align}
    \Pr(B=j \mid L=l) &= P_2(j; 2{\cappa}-l, q), 
\end{align}
and
\begin{align}
    \Pr(B>j \mid L=l) &= \overline{F}_2(j;2{\cappa}-l,q) \label{eq:PBj}
\end{align}
for every $l=0,1,\dots$, where $P_2$ and $\overline{F}_2$ are defined in~\eqref{eq:P2} and~\eqref{eq:F2}.

Therefore, we have
\begin{align}
    \Pr(F) 
    &\leq \Pr(2L+2B+M \ge 2\cappa-1) \label{eq:FLBM} \\
    &= \Pr(L\ge\cappa)
    + \sum_{l=0}^{\cappa-1} \Pr(L=l) \cdot
        \Pr( 2L+2B+M \ge 2\cappa-1 \mid L = l) \\
    &= \Pr(L\ge\cappa) + \sum_{l=0}^{\cappa-1} \Pr(L=l) \cdot
    \bigg( 
    \Pr(B > \cappa-l-1 \mid L=l) \nonumber \\
    &\qquad +    
    \sum_{j=0}^{\cappa-l-1} \Pr(B=j \mid L = l) \cdot \Pr(M\ge2\cappa-1-2l-2j) \bigg)\\
    &= \overline{F}_1(\cappa;p) + \sum_{l=0}^{\cappa-1} P_1(l+1;p) \cdot \bigg( \overline{F}_2(\cappa-l-1; 2{\cappa}-l,1-p) \nonumber \\ &\quad 
    + \sum_{j=0}^{\cappa-l-1} P_2(j; 2{\cappa}-l,1-p) \cdot  \overline{F}_1(2\cappa-1-2l-2j;p) \bigg)
    \label{eqn:F2} 
    %\\ &= \overline{F}_1(\cappa;p) + \sum_{l=1}^{\cappa} P_1(l;p) \cdot \bigg( \overline{F}_2(\cappa-l; 2{\cappa}+1-l,1-p) \nonumber \\ &\quad + \sum_{j=0}^{\cappa-l} P_2(j; 2{\cappa}+1-l,1-p) \cdot  \overline{F}_1(2\cappa+1-2l-2j;p) \bigg) \label{eqn:F2}
\end{align}
which is equal to the upper bound~\eqref{eq:upperbound} in Theorem~\ref{th:bounds} (with $l$ replaced by $i-1$).
% The last step plugs in ~\eqref{eq:PLi}--\eqref{eq:PBj}. 

\subsection{The Lower Bound}
\label{s:lower}

A lower bound can be obtained by calculating the success probability of the private-mining attack under $\Delta=0$. 
Note that with a zero delay, Condition~\ref{honest_condition} holds without any conversion, so %we will use
the fraction of honest mining power should be the original $\rho$ instead of $p$.
Another material %The only other 
difference is that all honest blocks become public immediately.
% the random walk %needs to reach
% \dg{reaches}
% $\Dh-\Da$ (in lieu of $\Dh-\Da-1$). %for the private-mining attack to succeed. %in the second case of~\eqref{eq:PFLB}.
%Let $F_0$ denote  
Let $L'$, $B'$, and $M'$ be identically defined as their counterparts $L$, $B$, and $M$ except that the probability $p$ is replaced with $\rho$. 
Under zero delay,
the event that the private-mining attack violates the safety of the target transaction (which appears at time $\tau$) is exactly $2L'+2B'+M'\ge2\cappa$.
We denote this event as $F_0$.
Assuming again $\tau$ is not small, we have the following exact formula for the probability of safety violation of the target transaction as a function of $\cappa$ and $\rho$:
\begin{align}
    \Pr(F_0)
    &=
    \Pr( 2L'+2B'+M'\ge2\cappa ) \\
    &=
    \overline{F}_1(\cappa;\rho) + \sum_{l=0}^{\cappa-1} P_1(l+1;\rho)
    \cdot \bigg( 
    \overline{F}_2(\cappa-l-1; 2\cappa-l, 1-\rho) \nonumber \\
    & \qquad + \sum_{j=0}^{\cappa-l-1} P_2(j; 2\cappa-l, 1-\rho) \cdot  \overline{F}_1(2\cappa-2l-2j;\rho) \bigg) 
    % &=
    % \overline{F}_1(\cappa;\rho) + \sum_{i=1}^{\cappa} P_1(i;\rho)
    % \cdot \bigg( 
    % \overline{F}_2(\cappa-i; 2\cappa+1-i, 1-\rho) \nonumber \\
    % & \qquad + \sum_{j=0}^{\cappa-i} P_2(j; 2\cappa+1-i, 1-\rho) \cdot  \overline{F}_1(2\cappa+2-2i-2j;\rho) \bigg) 
    % \\    \begin{split}
    % \dg{\overline{F}_1(\cappa;\rho) + \sum_{i=1}^{\cappa} P_1(i;\rho) \cdot
    % \bigg(\overline{F}_2(\cappa-i; 2{\cappa}+1-i,1-\rho) % \\ &\quad 
    % + \sum_{j=0}^{\cappa-i} P_2(j; 2{\cappa}+1-i,1-\rho) \cdot  \overline{F}_1(2\cappa+2-2i-2j;\rho) \bigg) .
    % }
    % \end{split}
\label{eq:PF0}
\end{align}
which is equal to the lower bound~\eqref{eq:lowerbound} in Theorem~\ref{th:bounds}.

\section{The Safety Violation Probability: Proof of Theorem~\ref{th:exponential}}
%Exponential Security Bounds}
% Large Deviations Asymptotics}
\label{s:asymptotics}

To gain additional insights, we evaluate the probability of the safety violation event using simple bounding
techniques.  
% we apply the large deviations theory to characterize the preceding bounds in the limit of large confirmation depths and prove Theorem~\ref{th:exponential}. We show that as ${\cappa-1}\to\infty$, 
%It turns out that the safety violation probability under the private-mining attack is dominated by the race between honest and adversarial blocks in the first $2\cappa$ blocks mined after the target transaction becomes public, where the additional adversarial advantages gained before and after those %${2\cappa-2}$ 
% blocks contribute \dg{marginally} %at most a moderate constant factor
% \rl{Are we confident in this claim? It may seem that way asymptotically, but I am not sure if the numbers work out that way for small k}
% \dg{AGREE WE CAN'T BE TOO SURE ABOUT SMALL k.  SHALL WE DROP THE STATEMENT?  OR MOVE IT TO THE NUMERICAL SECTION WITH AN ADDITIONAL QUANTIFIER OF LARGE k?}
% to the safety violation probability.
% \rl{I am OK with dropping it or add ``asymptotically''. Btw, why is it 2k-2? If it is the middle portion, that is 2k-L, which could have been 2k.}

\subsection{The Upper Bound}

Recall  
$\Pr(F) \le \Pr( 2L + 2B + M \ge 2\cappa-1 )$ from~\eqref{eq:FLBM}.
%and that $p = \rho e^{-\lambda\Delta} > \frac12$ and $q=1-p$.
The moment generating function (MGF) of the geometric random variables $L$ and $M$ can be expressed as
\begin{align}
    \expect{ e^{\nu L} }
    = \expect{ e^{\nu M} } = \frac{ p - q }{ p - q e^\nu } \label{eq:mgfL}
\end{align}
which holds for every $\nu<\log(p/q)$. % DOUBLE CHEck
Conditioned on $L=l$, $B$ is a binomial($2{\cappa}-l,q$) random variable, whose MGF is expressed as
% M(t) = [(1 – p) + pet]n.
\begin{align}
    \expect{ e^{\nu B} | L=l }
    = \left( p+qe^\nu \right)^{2{\cappa}-l}
\end{align}
for every real number $\nu$.

Using the Chernoff bound~\cite{ross2010first}, we have for every $\nu>0$:
\begin{align}
    \Pr & ( 2L+2B+M \ge 2\cappa-1 ) \nonumber \\
    &= \Pr \left( L+B+\frac M2 \ge {{\cappa-\frac12}} \right) \\
    &\le \expect{ e^{\nu\left(L+B+\frac M2{{-\cappa+\frac12}} \right)} } \\
    &= \expect{ \expect{ e^{\nu(L+B)} \mid L } } \expect{e^{\nu\frac M2}} e^{-\nu \left({{\cappa-\frac12}}\right) } \\
    &= \expect{ e^{\nu L} \left(p+qe^\nu \right)^{2{\cappa}-L} }
     \frac{p-q}{p-qe^{\frac{\nu}2}} e^{-\nu \left({{\cappa-\frac12}}\right)} \\
    &= \left(p+qe^\nu \right)^{2{\cappa}} 
    \frac{p-q}{ p - q \frac{e^\nu}{p+qe^\nu} }
     \frac{p-q}{p-qe^{\frac{\nu}2}} e^{-\nu \left({{\cappa-\frac12}}\right)} 
     \label{eq:Pqv} \\
    %&= \left(p+qe^\nu \right)^{{2\cappa-2}} e^{-\nu ({\cappa-1})} \frac{(p-q)^2(p+qe^\nu)^2  e^{-\frac\nu2}} {(p-pq+(p-1)qe^\nu)(p-qe^{\frac\nu2}) } \rl{seems slightly wrong}. \\
    &= \left(p+qe^\nu \right)^{2\cappa} e^{-\nu \cappa} \frac{(p-q)^2(p+qe^\nu)  e^{\frac\nu2}} {(p^2-q^2e^\nu)(p-qe^{\frac\nu2}) }.  \label{eq:pqnu}
    % \\&= \left(p+qe^\nu \right)^{{2\cappa-2}} e^{-\nu ({\cappa-1})} \frac{(p-q)^2(p+qe^\nu)^3  e^{-\frac\nu2}} {(p^2-q^2e^\nu)(p-qe^{\frac\nu2}) }.  \label{eq:pqnu}
\end{align}
The exponential coefficient for $\cappa$ is equal to
\begin{align}
    2\log(p+qe^\nu) - \nu .
\end{align}
To yield an asymptotic bound, we note that the tightest exponent for $\cappa$ is obtained with
\begin{align} \label{eq:enu}
    e^\nu = p/q
\end{align}
and thus $p+qe^\nu=2p$.
(Note that~\eqref{eq:enu} guarantees that the MGF~\eqref{eq:mgfL} can be invoked to arrive at~\eqref{eq:Pqv}.)
Using~\eqref{eq:FLBM} and plugging~\eqref{eq:enu} back into~\eqref{eq:pqnu}, we get the upper bound of
\begin{align}
    \Pr(F)
    &\leq \Pr(2L+2B+M \ge {2\cappa-1}) \\
    &\le 2 \, (4pq)^{\cappa}\left( \sqrt{\frac{p}q} + 1 \right)
    % &\le (4pq)^{\cappa-1} 8p ( \sqrt{pq} + q )
\end{align}
which is equal to~\eqref{eq:aupper}.
%\rl{nice catch on the extra term. Why does Section 3 write it differently?}
%\dg{PREFER TO USE ONLY p IN THEOREM TO MATCH THE LOWER BOUND AND ALSO THE OTHER THEOREM.}

\subsection{The Lower Bound}

According to the discussion in Section~\ref{s:lower}, the safety violation event $F_0$ under the private-mining attack in the case of $\Delta=0$ is expressed exactly as
%\begin{align}
 $2L'+2B' + M' \ge 2\cappa$.
%\end{align}
% where the extra ``$1$'' relative to~\eqref{eq:FLBM} is because honest blocks become immediately public in this case.  
Evidently, 
\begin{align} \label{eq:PFLB'}
    \Pr(F_0) > \Pr( L'+B' \ge {\cappa} ) .
\end{align}
It is not difficult to see that the larger the pre-mining lead $L'$, the more likely that $L'+B'$ meets the threshold ${\cappa}$, hence for every $l=0,1,\dots,k$,
\begin{align} \label{eq:PB'}
    \Pr( B' \ge {\cappa}-l \mid L'=l )
    \ge
    \Pr( B' \ge {\cappa} \mid L'=0 ) .
\end{align}
Let $B_0$ denote a binomial random variable with parameter $(2{\cappa},q)$.  Equations~\eqref{eq:PFLB'} and~\eqref{eq:PB'} imply that
\begin{align}
    \Pr(F_0) > \Pr( B_0 \ge {\cappa} ) .
\end{align}
%  Using~\cite[Theorem 2]{arratia1989tutorial}, we can write
Using~\cite[Lemma 4.7.2]{ash1965information}, we can write
\begin{align} \label{eq:PF0d}
    \Pr(F_0)
    > \frac1{\sqrt{{\cappa}}} e^{-2{\cappa}
    d\left( \frac12 \,\|\, \rho \right) }
    % \ge \frac{\rho}{2\rho-1} \frac1{\sqrt{\pi {\cappa-1}}} e^{-2{\cappa}    d\left( \frac12 \,\|\, \rho \right) }
    % H_0}
\end{align}
where %$d(\cdot||\cdot)$ denotes 
the relative entropy can be evaluated as
% between two Bernoulli distributions, and as a special case,
\begin{comment}
where \rl{a little confusing to reuse p and q.} \rl{does this d(||) function has a name? I slightly prefer to omit its definition and directly give the end result. Not very interesting to walk through this simple math, as it is not so relevant to us }
% FINE WITH NOT INCLUDING THIS DETAILED DEFINITION.  d(||) IS knOWN AS THE RELATIVE ENTROPY OR KL DIVERGENCE.  IT IS WELL knOWN IN STATISTICS AND INFORMATION THEORY.  THE BOUNDS FOR THE BINOMIAL TAIL ARE USUALLY EXPRESSED IN TERMS OF AN EXPONENTIAL FUNCTION WITH THE COEFFICIENT BEING 2 d(||).  I FEEL MENTIONING THIS DRAWS AN INTERESTING CONNECTION AND MAkeS IT MORE INTERESTING TO INFORMATION THEORISTS.
\begin{align}
    d(p\|q) = p \log\frac{p}{q} + (1-p) \log\frac{1-p}{1-q}
\end{align}
denotes the relative entropy of a Bernoulli($p$) and a Bernoulli($q$) for $p,q\in(0,1)$.  Evidently,
\end{comment}
\begin{align}
    d\Big( \frac12 \,\Big\|\, \rho \Big)
    &= - \frac12 \log (4\rho(1-\rho)) .
\end{align}
Hence~\eqref{eq:PF0d} becomes the lower bound~\eqref{eq:alower} in Theorem~\ref{th:exponential}.

\section{Numerical Results}
\label{s:numerical}

\begin{figure}
%\begin{figure*}[bt]
  \centering
  \includegraphics[width=0.95\textwidth]{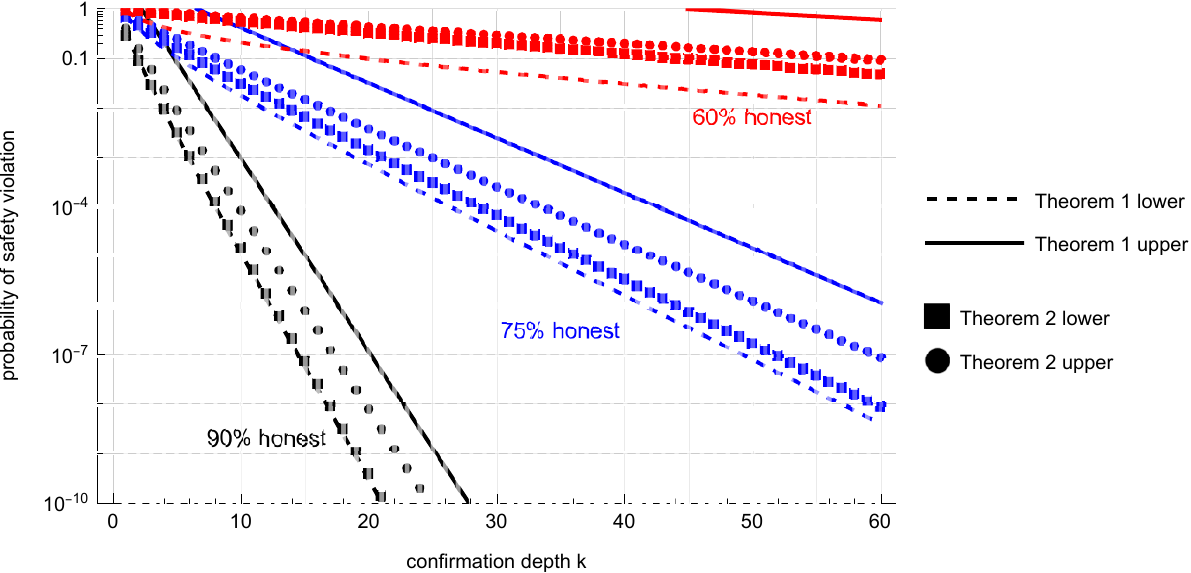}
  \caption{Safety violation probability bounds as a function of the confirmation depth $\cappa$.  All bounds given in Theorems~\ref{th:exponential} and~\eqref{th:bounds} are calculated for Bitcoin's nominal mining rate ($\lambda=1/600$ blocks per second), a block propagation delay bound of $\Delta=10$ seconds, and adversarial mining ratios of 10\% ($\rho=0.9$), 25\% ($\rho=0.75$), and 40\% ($\rho=0.6$).}
  \label{f:btc123}
%\end{figure*}
\end{figure}

\begin{figure}
% \begin{figure*}[tb]
  \centering
  \includegraphics[width=0.95\textwidth]{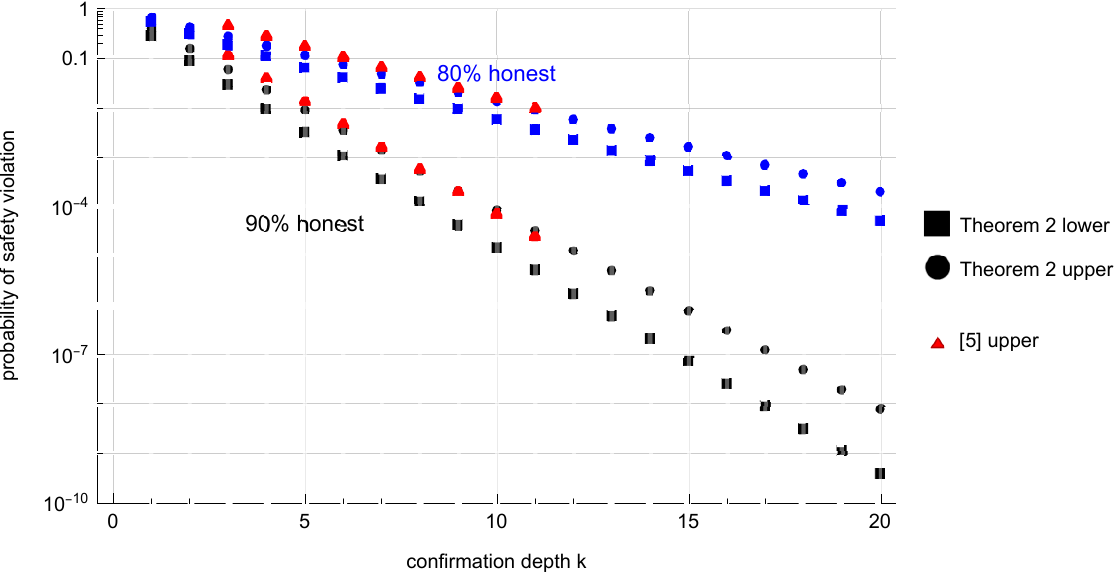}
  \caption{Safety violation probability bounds as a function of the confirmation depth $\cappa$.  The bounds given in Theorem~\eqref{th:bounds} are shown along with the bound of~\cite{gavzi2021practical} for Bitcoin's nominal mining rate ($\lambda=1/600$ blocks per second), a block propagation delay bound of $\Delta=10$ seconds, and adversarial mining ratios of 10\% and 20\%.}
    \label{f:btc12}
    \bigskip\bigskip
% \end{figure*}
\end{figure}

Figure~\ref{f:btc123} shows the upper and lower bounds of Theorems~\ref{th:exponential} and~\ref{th:bounds}
% calculated using~\eqref{eqn:F2} and~\eqref{eq:PF0}, respectively, 
for the Bitcoin mining rate of $\lambda=1/600$ (10 minutes per block), a block propagation delay bound of $\Delta=10$ seconds, and adversarial mining ratios of 10\% ($\rho=0.9$), 25\% ($\rho=0.75$), and 40\% ($\rho=0.6$).
The upper and lower bounds of Theorem~\ref{th:bounds} are quite close for a wide range of honest-to-adversarial mining power ratios.
In addition, the two lower bounds are very close and their gap does not depend on the absolute mining rates and the delay bound.
We also note that the upper bound of Theorem~\ref{th:exponential} can be too loose %to be useful
for very small depths.

Consider the practical rule of thumb of committing a Bitcoin block or transaction by 6 confirmations ($\cappa=6$).
If the adversary controls 10\% of the total mining power, the probability of safety violation is between 0.11\% and 0.35\%.  % with $k=5$, and 
If the confirmation depth is increased to $\cappa=14$, the probability is then between $2\times10^{-7}$ and $2\times10^{-6}$.  % with $k=13$. 
If the adversary controls 25\% of the total mining power, similar guarantees are obtained at approximately $k=21$ and $k=50$, respectively.
% If the adversary controls 30\% of the total mining power, similar guarantees are obtained at $k=34$ and $k=80$, respectively.
%25\%  \rl{suggest changing the numbers to 30\% adv} of the total mining power, similar guarantees are obtained at $k=20$ and $k=48$, respectively.

%\input{figure2}

\begin{figure*}
  \centering
  \includegraphics[width=0.99\textwidth]{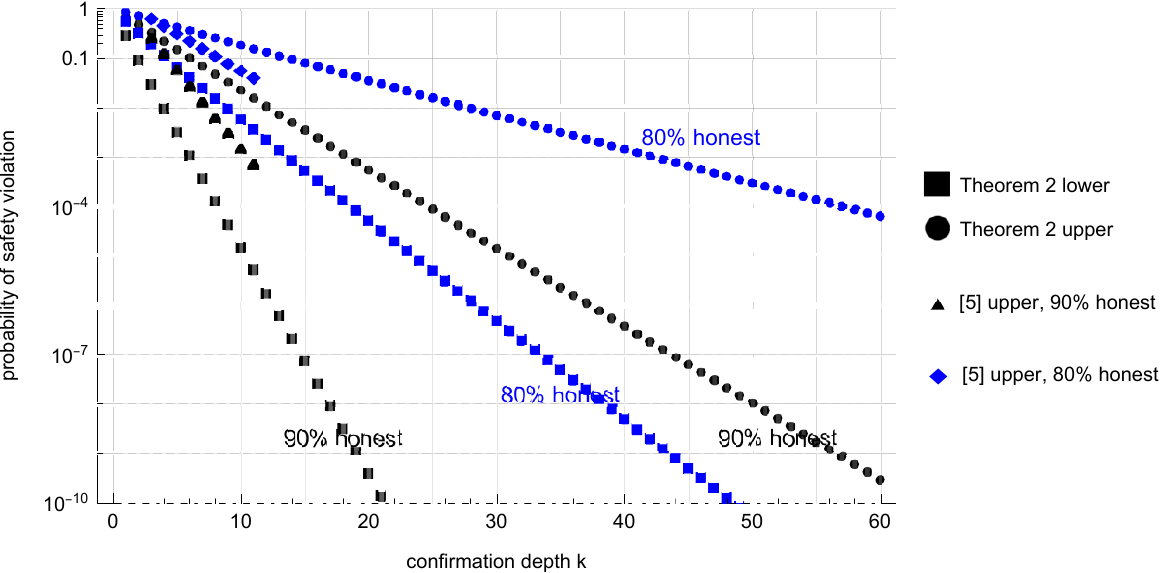}
  \caption{Safety violation probability bounds as a function of the confirmation depth $\cappa$.  The bounds given in Theorem~\ref{th:bounds} are shown along with the bound of~\cite{gavzi2021practical} for Ethereum's nominal mining rate ($\lambda=1/13$ blocks per second), a block propagation delay bound of $\Delta=2$ seconds, and adversarial mining ratios of 10\% and 20\%.}
    \label{f:eth12}
    \vspace{2pt}
\end{figure*}

All the bounds are either exponential or nearly exponential in the confirmation depth.
We also note that the asymptotic slopes of the bounds of Theorem~\ref{th:bounds} match those of Theorem~\ref{th:exponential}.
% 20\% ($\rho=0.8$), 30\% ($\rho=0.7$), and 40\% ($\rho=0.6$).
% \rl{I prefer to directly use rho in the figure. I don't see a reason to introduce the ratio notation. 
%Also, I suggest we do 10\%, 25\% and 40\% (or even 45\%), to show our results continue to be accurate even when close to the fundamental limit }
The more mining power the adversary controls, the flatter the curves are.  For the aforementioned Bitcoin parameters, the fundamental fault tolerance limit is $49.8\%$, i.e., there is no safety regardless of the confirmation depth if $\rho < 0.502$~\cite{dembo2020everything,gavzi2020tight}.

We also show in Figure~\ref{f:btc12} the prior best results from~\cite{gavzi2021practical} for comparison.
For small confirmation depths, our bounds are tighter than theirs despite our much simpler method. 
For example, with 10\% adversary, $\Delta=10$ seconds delay, and %$k=5$
6-confirmation, our result upper bounds the safety violation probability at 0.35\%, compare to 0.48\% from~\cite{gavzi2021practical}.
As confirmation depths increase, their results eventually become tighter than ours;
the cross-over point for the above parameter occurs at $k=9$. %8$.
With a 20\% adversary, the cross-over point has not occurred within $k \leq 11$. %10$.

Figure~\ref{f:eth12} shows the upper and lower bounds for Ethereum's mining rate of $\lambda=1/13$, a block propagation delay bound of $\Delta=2$ seconds, and adversarial mining ratios of 10\% ($\rho=0.9$) and 20\% ($\rho=0.8$). 
It is easy to produce numerical results for higher adversarial ratios, but we omit them to keep the figure easy to read. 
As expected, our bounds are much looser for Ethereum parameters where the block interval is short relative to the block propagation delay bound. 
The methods from~\cite{gavzi2021practical} yield tighter bounds for such settings. 

Because the lower bounds of both Theorems~\ref{th:exponential} and~\ref{th:bounds} depend only on the ratio of honest and adversarial mining powers, the gap between the two lower bounds remains small regardless of the absolute mining rates and delays. This is clearly seen in Figure~\ref{f:btc123} for the Bitcoin parameters.  (This is also observed for Ethereum's parameters, which we omit here.)

\paragraph{Acknowledgement.}
The authors thank Dr.\ David Tse for helpful discussion. 
This work was supported in part by the National Science Foundation % (NSF 2143058).
under award numbers 2143058 and~2132700.

% \section{Conclusion}
\bibliographystyle{plain}
\bibliography{refs}

\end{document}